\newcommand{\de}{\partial}
\newcommand{\sign}{\text{sign}}
\newtheorem{proposition}{Proposition}
\newtheorem{definition}{Definition}
\title{Second-order multilane traffic flow models: from the microscopic to the macroscopic scale}
\author[1,*]{Matteo Piu}
\author[2]{Giuseppe Visconti}
\author[2]{Gabriella Puppo}
\affil[1]{University of Verona, Strada le Grazie 15, 37134 Verona, Italy\\
  \texttt{matteo.piu@univr.it}}
\affil[2]{Sapienza -- University of Rome, Piazzale Aldo Moro 5, 00185 Rome, Italy\\
  \texttt{\{giuseppe.visconti,gabriella.puppo\}@uniroma1.it}}
\affil[*]{Corresponding author.}
\date{} % lascia vuoto per togliere la data
\begin{document}

\maketitle

%% Abstract
\begin{abstract}
\noindent This study addresses multilane vehicular traffic modelling, focusing on the rigorous transition from microscopic (individual vehicle-based) to macroscopic (aggregate flow-based) descriptions. While previous research on multilane traffic has largely focused on first-order models, we derive two novel multilane second-order macroscopic models by applying a microscopic-to-macroscopic limit to the multilane Bando–Follow-the-Leader model. These two models diverge fundamentally in their closure relations during the scaling limit: Model 1 directly projects the velocity dynamics, whereas Model 2 is built upon the heuristic conservation of the generalized momentum across lanes. Both models incorporate lane-changing dynamics through source terms in a hyperbolic system of balance laws, yet their structural differences lead to distinct relaxation limits, with Model 2 naturally relaxing to the first-order macroscopic multilane model recently derived by the authors, and Model 1 yielding a novel, non-standard first-order system. We propose several numerical experiments showing that the models can reproduce complex traffic phenomena, including congestion propagation, non-equilibrium effects, capacity drops, and asymmetric lane usage. Leveraging experimental datasets from real-world highways, we further construct lane-specific empirical fundamental diagrams and compare them with their simulated counterparts. The results demonstrate that the structural assumptions of Model 2 provide a superior quantitative fit with empirical data, faithfully capturing critical density values, traffic scattering, and characteristic lane-dependent patterns, thus offering a robust and generalizable tool for realistic traffic flow analysis.
\end{abstract}

\noindent\textbf{2020 Mathematics Subject Classification.} 76A30, 35L65, 35L40, 35L60.

\noindent\textbf{Keywords and phrases:} traffic flow, multilane, lane changing dynamics, microscopic-to-macroscopic, second-order models, balance laws, relaxation limits, fundamental diagrams.

%\tableofcontents

\section{Introduction}
The modeling of traffic flow is fundamental for understanding and managing vehicular mobility in modern transportation networks. As urbanization and population continue to grow, effective traffic models are crucial for the planning, optimization, and control of traffic systems. In the context of increasing road congestion, environmental concerns, and the push toward sustainable mobility, accurate and efficient traffic models play a central role in traffic planning with the aim of reducing travel times, mitigating emissions, and supporting the integration of emerging technologies such as intelligent transportation systems and autonomous vehicles.

From a mathematical standpoint, three primary approaches address this problem, each corresponding to a different scale of representation: microscopic (based on individual vehicles), mesoscopic (based on probability distributions  of vehicles velocities), and macroscopic (where aggregate quantities such as density or local mean velocity are evolved)  \cite{libropuppo,review:Hoogendoorn,hoogendoorn2001state}. Microscopic models describe the dynamics of individual vehicles, typically through systems of ordinary differential equations. Mesoscopic models, inspired by kinetic theory, provide a statistical description of vehicle densities and velocities, offering a compromise between detail and computational tractability. Macroscopic models, often based on systems of partial differential equations, treat traffic flow as a continuum and describe the evolution of aggregated quantities such as vehicle density and mean velocity. The connection between these models lies in their scale of representation and in the limiting procedures that allow transitions between them.  Theoretical frameworks  rigorously establish these relationships~\cite{HPRV20}, bridging detailed vehicle behavior and large-scale traffic phenomena.

Here, we focus on  microscopic and  macroscopic approaches, their application in multilane traffic scenarios, and the study of the transition between the two scales. Studying the connections between microscopic and macroscopic traffic models is advantageous for at least two primary reasons. First, it enables the design of macroscopic models that faithfully reproduce the essential dynamics observed at the microscopic level, thereby leading to more accurate and reliable continuum descriptions. Second, it allows leveraging the significant computational efficiency of macroscopic models, particularly suitable for large-scale simulations and real-time control, while retaining a rigorous foundation rooted in individual vehicle interactions. Moreover, since macroscopic models operate on aggregate quantities, they reduce the dependence on detailed initial data, which is often unavailable in practical scenarios.

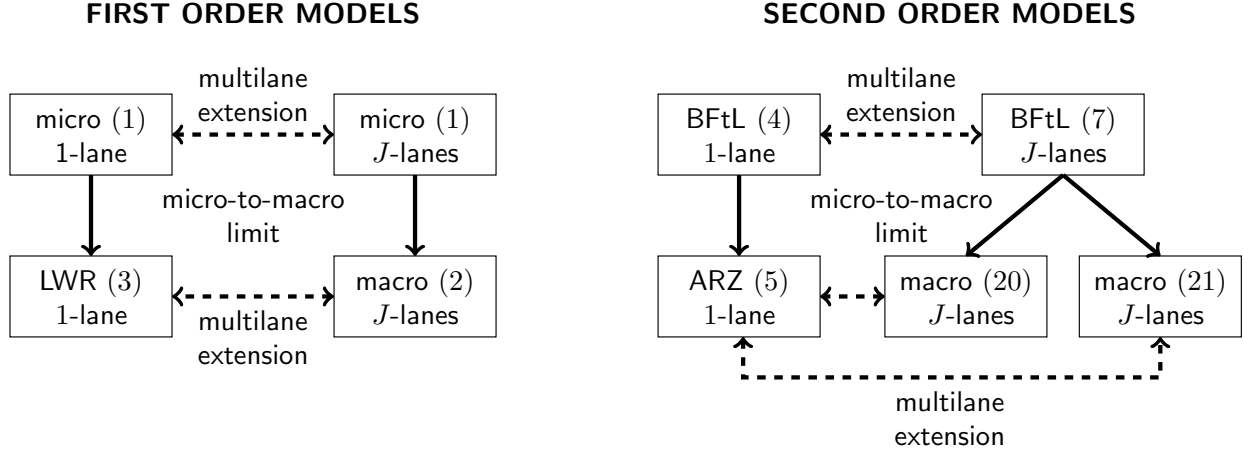
\begin{figure}
\centering
\resizebox{\textwidth}{!}{
\begin{tikzpicture}
\draw[draw=black] (0,0) rectangle (2,1);
\node[align=center] at (1,0.5) {\textsf{micro \eqref{micro_1_ord}}\\ \textsf{1-lane}};

\draw[draw=black] (4,0) rectangle (6,1);
\node[align=center] at (5,0.5) {\textsf{micro \eqref{micro_1_ord}}\\ \textsf{$J$-lanes}};

\draw[draw=black] (0,-2) rectangle (2,-1);
\node[align=center] at (1,-1.5) {\textsf{LWR} \eqref{lwr}\\ \textsf{$1$-lane}};

\draw[draw=black] (4,-2) rectangle (6,-1);
\node[align=center] at (5,-1.5) {\textsf{macro \eqref{macro_1_ord}}\\ \textsf{$J$-lanes}};

\draw[draw=black,<->,line width=1.5pt,dashed](2,0.5)--(4,0.5);
\node[align=center] at (3,1) {\textsf{multilane}\\ \textsf{extension}};

\draw[draw=black,->,line width=1.5pt](1,0)--(1,-1);
\draw[draw=black,->,line width=1.5pt](5,0)--(5,-1);
\node[align=center] at (3,-0.5){\textsf{micro-to-macro}\\ \textsf{limit}};

\draw[draw=black,<->,line width=1.5pt,dashed](4,-1.5)--(2,-1.5);
\node[align=center] at (3,-2) {\textsf{multilane}\\ \textsf{extension}};

% secondo grafico
\draw[draw=black] (0+8,0) rectangle (2+8,1);
\node[align=center] at (1+8,0.5) {\textsf{BFtL \eqref{KBFTL}}\\ \textsf{$1$-lane}};

\draw[draw=black] (4+8,0) rectangle (6+8,1);
\node[align=center] at (5+8,0.5) {\textsf{BFtL} \eqref{bftl_nlane}\\ \textsf{$J$-lanes}};

\draw[draw=black] (0+8,-2) rectangle (2+8,-1);
\node[align=center] at (1+8,-1.5) {\textsf{ARZ} \eqref{eq:arz} \\ \textsf{$1$-lane}};

\draw[draw=black] (2.8+8,-2) rectangle (4.8+8,-1);
\node[align=center] at (3.8+8,-1.5) {\textsf{macro \eqref{eq:model1:noncons}}\\\ \textsf{$J$-lanes}};

\draw[draw=black] (5.2+8,-2) rectangle (7.2+8,-1);
\node[align=center] at (6.2+8,-1.5) {\textsf{macro \eqref{eq:model2:noncons}}\\ \textsf{$J$-lanes}};
% secondo grafico
\draw[draw=black,<->,line width=1.5pt,dashed](2+8,0.5)--(4+8,0.5);
\node[align=center] at (3+8,1) {\textsf{multilane}\\ \textsf{extension}};

\draw[draw=black,->,line width=1.5pt](1+8,0)--(1+8,-1);
\draw[draw=black,->,line width=1.5pt](5+8,0)--(3.8+8,-1);
\draw[draw=black,->,line width=1.5pt](5+8,0)--(6.2+8,-1);
\node[align=center] at (3+8,-0.5) {\textsf{micro-to-macro}\\ \textsf{limit}};

\draw[draw=black,<->,line width=1.5pt,dashed](2.8+8,-1.5)--(2+8,-1.5);
\draw[draw=black,<->,line width=1.5pt,dashed](2.8+6.25,-2.0)--(2.8+6.25,-2.5)--(5.2+9,-2.5)--(5.2+9,-2.0);
\node[align=center] at (3.6+8,-3.0) {\textsf{multilane}\\ \textsf{extension}};

\node[align=center] at (3,2) {\textbf{\textsf{FIRST-ORDER MODELS}}};
\node[align=center] at (3.6+8,2) {\textbf{\textsf{SECOND-ORDER MODELS}}};
\end{tikzpicture}}
\caption{Links between models. The left panel illustrates the connections between first-order models, including the transition from microscopic (\emph{micro}) to macroscopic (\emph{macro}) scales and their multilane extensions. The right panel depicts the second-order models and their respective transitions and extensions. Dashed arrows represent multilane extensions, while solid arrows indicate the micro-to-macro limit.}
\label{fig:relazioni}
\end{figure}

\paragraph{State of the art} The modeling of multilane traffic flow presents additional challenges compared to single-lane scenarios, due to interactions such as lane-changing maneuvers and variable speeds across lanes. Several multilane traffic models have been proposed to capture these complexities. For example, Klar and Wegener~\cite{KlarWegener98:I,KlarWegener98:II} present a multilane traffic model that begins with the modeling of microscopic interaction rules along the road and across the lanes. From these, they derive an Enskog-like kinetic model and compute a fluid-dynamic limit. Greenberg, Klar and Rascle~\cite{GreenbergKlarRascle} describe an aggregate lane approach where distinct equilibrium relationships are considered to model multilane effects. More recently, Holden and Risebro~\cite{HoldenRisebro} extend the first-order LWR model to multiple lanes by modeling lane-changing based on velocity differences. Their model is described by a coupled system of hyperbolic balance laws. Instead, Song and Karni~\cite{SongKarni} focus on the derivation of a second-order model for multilane traffic. Models involving two coupled evolution equations, one for the density and one for  velocity, are commonly referred to as second-order models in the literature. These models differ from first-order ones in that velocity is treated as an independent variable, separate from density, and governed by its own evolution equation. As a result, second-order models form $2 \times 2$ hyperbolic systems. 

In a previous paper~\cite{piu2}, we derived a first-order macroscopic multilane model as the limit of a first-order hybrid microscopic model with lane change (LC) dynamics governed by two lane-changing rules: incentive and safety. The incentive criterion triggers a lane change if a vehicle in lane $j$ would travel faster in lane $j'$, while the safety criterion ensures that the maneuver is admissible only when sufficient space is available in the new lane. The microscopic model is obtained as the multilane extension of a single-lane model %\cite{pipes,ftl1,ftl2}
and is of the type
\begin{equation} \label{micro_1_ord}
\begin{cases}
\dot x_n = \tilde{V}_j(\Delta x_n^j), \\
\text{Change lane to lane $j'$ if:} \\
\quad \tilde{V}_{j'}(\Delta x_n^{j'}) > \dot{x}_n \quad \text{(incentive criterion)}, \\
\quad \Delta x^{j'}_n > l + d_s \ \text{and} \ x_n - x_{p_n^{j'}} > l + d_s \quad \text{(safety criterion)}.
\end{cases}
\end{equation}
for $n \in I_j$ and $j=1,\dots,J$.
Here, \( J \) denotes the number of lanes, $I_j$ is the set of vehicles in lane $j$, $x_n\in \mathbb{R}$ is the position of the $n$th vehicle and refers to its rear (i.e., to its back bumper), $\Delta x_n^j$ is the headway between this vehicle and its leader in lane $j$, and $p_n^{j'}$ represents the potential follower of vehicle $n$ in the target lane $j'$, as in Fig.~\ref{strada}. The parameter \( l>0 \) represents the vehicle length, assumed identical for all vehicles (single homogeneous population), while \( d_s>0 \) is the minimum safety distance. The functions $\tilde{V}_j(\cdot)$ prescribe the desired or equilibrium velocity as a function of the headway.
 The macroscopic model derived from model \eqref{micro_1_ord} consists in a system of $J$ balance laws, one per lane, given by \cite{piu2}
\begin{equation}\label{macro_1_ord}
\partial_t \rho_j + \partial_x f_j(\rho_j) = \nu S_j, \quad j=1,\dots,J.
\end{equation}
Here, \(\rho_j = \rho_j(x, t)\) is the density of vehicles in lane \( j \), where \(\rho_j: \mathbb{R} \times [0, +\infty) \to [0, \rho^{\text{max}}]\). The flux function is given by the relation \( f_j(\rho_j) = \rho_j V_j(\rho_j) \), where \( V_j: [0, \rho^{\text{max}}] \to [0, V_j^{\text{max}}] \) denotes the equilibrium speed, which is assumed to be a strictly decreasing function of the density, with \( V_j(0) = V_j^{\text{max}} \) and \( V_j(\rho^{\text{max}}) = 0 \). The source term \( S_j=S_j(\rho_{j-1},\rho_{j},\rho_{j+1}) \) consists of two loss and gain terms that describe the mass exchange from/to lane \( j \) to/from the adjacent lanes \( j \pm 1 \), scaled with a parameter $\nu>0$ denoting the lane changing frequency. For the detailed structure of these source terms, we refer to the already cited work \cite{piu2}. Note that, in the case \( J=1 \), i.e., the single-lane scenario, the source terms vanish, and the model \eqref{macro_1_ord} coincides with the well-known LWR model %\cite{lwr1, lwr2},
\begin{equation}\label{lwr}
\partial_t \rho + \partial_x f(\rho) = 0,
\end{equation}
with $f(\rho)=\rho V(\rho)$, which consists of a scalar conservation law modelling mass conservation. Moreover, the LWR model can be viewed as a many-particle limit of the discrete (microscopic) model of interacting particle systems given by \eqref{micro_1_ord} in the single-lane case \cite{microtomacro1, microtomacro2, microtomacro3, microtomacro4}. The connections between all these first-order models are summarized in the left panel of Fig. \ref{fig:relazioni}.

In this work, we seek to extend the previous results in the context of second-order models. In first-order traffic models, only the density evolves in time, while the velocity is determined instantaneously through a prescribed fundamental relation with the density. In contrast, second-order models include an additional evolution equation for the velocity, allowing the description of non-equilibrium phenomena that are particularly relevant in multilane flows. Compared to first-order models, second-order macroscopic models account for anticipation effects, which describe drivers' ability to adjust their speed not only based on their current spacing, but also based on variations in traffic conditions ahead, effectively reacting to spatial gradients of density or velocity. This leads to more realistic dynamics, especially in congested or multilane traffic scenarios. More specifically, we aim to derive multilane second-order macroscopic models, obtained as the limit of microscopic models in analogy with the approach taken in \cite{piu2}.

To achieve this, we start considering the single-lane microscopic ``Bando--Follow-the-Leader'' (BFtL) model \cite{piu1,gong,stern,seibold} for a set of $N$ vehicles, described by the following system of $2N$ ordinary differential equations (ODEs):
\begin{equation}\label{KBFTL}
\begin{cases}
\dot{x}_n = v_n \\
\dot{v}_n = \alpha (\tilde{V}(\Delta x_n) - v_n) + \beta \dfrac{\Delta v_n}{(\Delta x_n - K)^{\gamma + 1}}
\end{cases} \quad n = 1, \dots, N,
\end{equation}
for $\alpha,\beta>0$ and \( \gamma \geq 0 \). The variables $x_n$ and \( v_n \) represent the position and the velocity of vehicle $n$ respectively, while \( \Delta x_n \) and \( \Delta v_n \) denote the headway and the relative velocity, respectively, between the \( n \)th vehicle and its leading vehicle.  Moreover, \( K\in\{0,l+d_s\} \) is a constant, and if \( K = 0 \), we recover the classical BFtL model \cite{aw2002}. Conversely, if \( K = l + d_s \) we obtain the ``Modified'' BFtL model \cite{degond}. Notably, \( K \) denotes the minimum distance between vehicles. In the classical model, vehicles are allowed to approach each other almost to the point of contact, in the so-called ``bumper-to-bumper'' scenario, whereas in the modified version, the distance \( d_s>0 \) is maintained between vehicles. The interaction between vehicles is characterized by the two terms  in the second equation in~\eqref{KBFTL}. The first term \cite{bando1,bando2} is a relaxation term towards an optimal (desired) velocity. The optimal velocity function \( \tilde{V}: [0, +\infty) \to [0, V^{\text{max}}),\) with \( \tilde{V} \in C^1 \cap L^\infty\), is typically assumed to be a bounded and monotonically increasing function of the headway. It approaches zero for small headways and reaches a maximum value \( {V}^{\text{max}} \) for larger headways. This term is scaled by a parameter \( \alpha > 0 \), which denotes the response speed, i.e., the sensitivity of the driver, with dimensions of the inverse of  time. The second term corresponds to the classical FtL interaction \cite{ftl1}, scaled by a parameter \( \beta > 0 \) with dimensions of length raised to the power of \( \gamma + 1 \) over time. In this interaction, the acceleration of a vehicle is directly proportional to the relative velocity of the interacting vehicles and inversely proportional to their mutual distance raised to the power of \( \gamma + 1 \).

The macroscopic counterpart of the BFtL model \eqref{KBFTL} is given by the Aw-Rascle-Zhang (ARZ) model \cite{arz1,arz2}, as shown with a rigorous derivation in \cite{aw2002,degond,rosini}. The model consists of a $2\times 2$ system of partial differential equations (PDEs) given by \begin{equation}\label{eq:arz}
\begin{cases}
\partial_t \rho + \partial_x (\rho v) = 0, \\
\partial_t v + (v - \rho P'(\rho)) \partial_x v = \alpha \left( V(\rho) - v \right),
\end{cases}
\end{equation} where $\rho = \rho(x,t)$ is the vehicle density and $v = v(x,t)$ is the mean velocity. The function $P(\rho)$ is called the pressure or hesitation function,  which accounts for drivers' reactions to the traffic state ahead and is typically taken as an increasing function of the density. In the original ARZ model \cite{aw2002}, obtained as the limit of the microscopic model \eqref{KBFTL} with $K=0$, the pressure has the form $P(\rho) \sim \beta \rho^\gamma$, with $\beta>0$ and $\gamma\geqslant 0$. As it is well known, selecting such pressure term can result in densities that exceed the maximum allowable value, which poses a challenge for modeling, since the maximum density has a well-defined physical interpretation, as long as the model is based on vehicles of a finite and non zero length. To address this issue, alternative models have been proposed in the literature that account for this constraint \cite{gsom1,gsom2,gsom3}. In particular, we mention the Modified ARZ (MARZ) model in~\cite{degond}, which is obtained as the limit of model \eqref{KBFTL} with $K=l+d_s$. This choice leads to a pressure $P(\rho) \sim \beta ({1}/{\rho} + {1}/{\rho^{\text{max}}})^{-\gamma}$ that ensures that the density remains within its physical limits.
%In this paper, we retain the original structure of the ARZ model and we consider the MARZ pressure term only when necessary.

The system described by \eqref{eq:arz} remains strictly hyperbolic when the density is strictly positive, featuring one genuinely nonlinear characteristic field and one that is linearly degenerate. The linearly degenerate field corresponds to the faster eigenvalue, which equals \( v \), and it gives rise to contact discontinuities, where the propagation speed remains continuous and is determined by \( v \), see \cite{arz1} for a deeper analysis. From a numerical perspective, approximating contact discontinuities presents several challenges. For instance, classical conservative schemes, such as Rusanov or Godunov, can generate significant nonphysical waves that degrade the numerical solution over long times. Various techniques have been introduced to address this problem, also in relation to the specific case of the ARZ system \cite{osci1, osci2, osci3}. In this paper, we will focus on a Roe-type wave-propagation scheme, which is an upwind scheme that linearizes the Jacobian matrix at each cell interface, as developed by Roe \cite{roe1, roe2}. This approach has been used also in~\cite{SongKarni}  and it accurately captures contact discontinuities in traffic flow without introducing nonphysical oscillations and wrong intermediate states. 

Moreover system \eqref{eq:arz}  can be rewritten in conservative form by introducing the variable \( y := \rho(v + P(\rho)) \). This quantity is frequently regarded as a form of generalized momentum of traffic. With its introduction, the ARZ model can be reformulated as follows:
\begin{equation}\label{eq:arz_cons}
\partial_t \begin{pmatrix} \rho \\ y \end{pmatrix} + \partial_x \begin{pmatrix} y-\rho P(\rho) \\ \frac{y^2}{\rho}-yP(\rho) \end{pmatrix} = \begin{pmatrix} 0 \\ \alpha \rho \left( V(\rho) - \frac{y}{\rho}-P(\rho) \right) \end{pmatrix}.
\end{equation}

In our approach, we explicitly incorporate the effects of lane changes into the macroscopic model by deriving macroscopic traffic laws directly from microscopic dynamics that already include lane-changing behavior. This micro-to-macro derivation enables us to capture lane change effects consistently within the macroscopic framework, rather than treating lane changes as separate or exogenous phenomena. Thus, the complex multilane interactions and their impact on aggregate traffic flow naturally emerge from the underlying microscopic processes. Conventional macroscopic models that do not include lane change effects fail to capture important multilane traffic phenomena such as lane-specific flow heterogeneity, capacity drops near bottlenecks caused by lateral interactions, and the formation of complex congestion patterns like synchronized flow. Without modeling lane changes, either explicitly or through micro-to-macro derivations, these models cannot represent how  lane changing interactions influence overall traffic dynamics.

Another key aspect in the study of traffic dynamics is given by fundamental diagrams, which relate traffic flow to vehicle density (and sometimes speed). These diagrams play a central role in both theoretical and applied traffic flow analysis by providing a compact macroscopic representation of traffic states and serving as a benchmark for model validation. They are directly obtained from experimental measurements collected via loop detectors, radar, or other sensing technologies. Empirical fundamental diagrams reveal crucial features such as the free-flow regime, the critical density marking the transition to congestion, and the flow reduction under high-density conditions. A traffic model is expected to accurately reproduce these features, capturing not only the general shape of the diagram but also lane-specific differences, possible asymmetries between lanes, and behavior near critical transitions. While first-order models are generally capable of reproducing the main features of fundamental diagrams, such as the overall shape, the capacity drop, and the critical density, second-order models offer an important additional capability: they can reproduce the variability observed in experimental data, particularly the scattering in congested regimes.

\paragraph{Plan and contribution of the paper}  
In this paper, we aim to study the macroscopic limit of the hybrid multilane BFtL microscopic model, with the goal of deriving a purely macroscopic description of multilane traffic dynamics based on second-order models. After this introduction, in Section \ref{sec:micro} we revise and summarize the main aspects of the microscopic model presented in \cite{piu1}. Then, Section \ref{sez:macromod} presents a detailed analysis of the macroscopic limit, explored through two different formulations. The first scaling transition is based on analyzing the variation of microscopic vehicle velocities, $v_n$, in the presence of lane changes. This is done by directly applying the BFtL model equations, ultimately deriving a macroscopic description. Similarly, in the second approach, we consider the macroscopic ARZ model in conservative variables \eqref{eq:arz_cons} and investigate how lane changes influence the temporal evolution of the momentum $y$. In both cases, the derivation relies on a detailed examination of the discrete lane-changing behavior of individual vehicles, leading to the formulation of appropriate interaction terms that account for lane-change dynamics. Crucially, these two models differ fundamentally in their closure relations during the micro-to-macro transition. While Model 1 directly projects the velocity dynamics, Model 2 is built upon the heuristic conservation of the generalized momentum $y$ across lanes. This structural divergence significantly impacts their macroscopic properties, most notably their relaxation limits: as we will detail, the relaxation of Model 1 yields a novel, non-standard first-order system, whereas Model 2 naturally relaxes to model \eqref{macro_1_ord}, previously introduced in \cite{piu2}.  The analysis results in the derivation of two novel second-order macroscopic multilane models, both obtained as macroscopic limits of the discrete BFtL multilane framework. Furthermore, both models can be interpreted as multilane extensions of the ARZ model, as illustrated in the right panel of Fig.~\ref{fig:relazioni}. The resulting macroscopic models are described by systems of balance laws, governing the evolution of vehicle density and average velocity for each lane. In Section \ref{sez:simulazioni}, we outline the numerical scheme employed for these models and present several numerical experiments to validate and explore their behavior. We also show comparisons with the microscopic multilane model and the first-order multilane model. Furthermore, we analyze scenarios both with and without lane-changing dynamics to highlight their impact on traffic flow. In this section, we also focus on constructing simulated fundamental diagrams based on our second-order macroscopic models. These diagrams are compared with empirical fundamental diagrams obtained from experimental traffic data. The comparison involves first analyzing and calibrating the models with respect to real-world measurements, then simulating traffic flow to generate fundamental diagrams, and finally evaluating the models' ability to reproduce key traffic features observed in practice, demonstrating how the structural assumptions of Model 2 ultimately yield a superior quantitative fit with empirical data. The paper concludes with Section \ref{sez:conclusions}, where we summarize the findings and discuss future research directions.

\section{Microscopic outlook}\label{sec:micro}
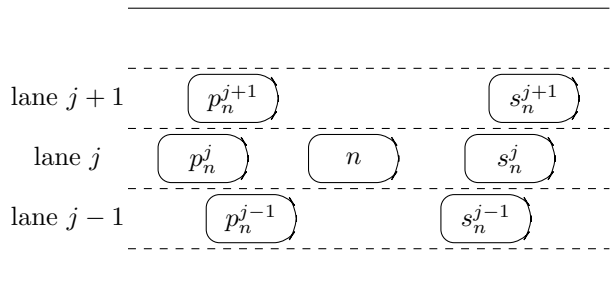
\begin{figure}[ht]
\centering
\begin{tikzpicture}[scale=0.8]
\draw [dashed](0,3)--(8,3);
\draw [dashed](0,2)--(8,2);
\draw [dashed](0,1)--(8,1);
\draw [dashed](0,0)--(8,0);
\draw (0,-1)--(8,-1);
\draw (0,4)--(8,4);

\path(-1,0.5)node{lane $j-1$};
\path (-1,1.5)node{lane $j$};
\path (-1,2.5)node{lane $j+1$};

%\draw (1,2.1) rectangle (2.5,2.9);
 \draw[rounded corners=5pt] (1,2.1) -- (1+1.35,2.1) -- (2.5,2.1+0.23) -- (2.5,2.9-0.23) -- (1+1.35,2.9) -- (1,2.9) -- cycle;
\draw (1+1.5/2,2.5)node{$p_n^{j+1}$};

%\draw (6,2.1) rectangle (7.5,2.9);
\draw[rounded corners=5pt]  (6,2.1) -- (6+1.35, 2.1) -- (7.5, 2.1+0.23) -- (7.5, 2.9-0.23) -- (6+1.35, 2.9) -- (6, 2.9) -- cycle;
\draw (6+1.5/2,2.5)node{$s_n^{j+1}$};

%\draw (0.5,1.1) rectangle (2,1.9);
\draw[rounded corners=5pt] (.5,1.1) -- (.5+1.35,1.1) -- (2,1.1+0.23) -- (2,1.9-0.23) -- (.5+1.35,1.9) -- (.5,1.9) -- cycle;
\draw (0.5+1.5/2,1.5)node{$p_n^{j}$};

%\draw (3,1.1) rectangle (4.5,1.9);
\draw[rounded corners=5pt] (3,1.1) -- (3+1.35,1.1) -- (4.5,1.1+0.23) -- (4.5,1.9-0.23) -- (3+1.35,1.9) -- (3,1.9) -- cycle;
\draw (3+1.5/2,1.5)node{$n$};

%\draw (5.6,1.1) rectangle (5.6+1.5,1.9);
\draw[rounded corners=5pt] (5.6,1.1) -- (5.6+1.35,1.1) -- (5.6+1.5,1.1+0.23) -- (5.6+1.5,1.9-0.23) -- (5.6+1.35,1.9) -- (5.6,1.9) -- cycle;
\draw (5.6+1.5/2,1.5)node{$s_n^{j}$};

%\draw (1.3,0.1) rectangle (2.8,0.9);
\draw[rounded corners=5pt] (1.3,0.1) -- (1.3+1.35,0.1) -- (2.8,0.1+0.23) -- (2.8,0.9-0.23) -- (1.3+1.35,.9) -- (1.3,.9) -- cycle;
\draw (1.3+1.5/2,0.5)node{$p_n^{j-1}$};

%\draw (5.2,0.1) rectangle (5.2+1.5,0.9);
\draw[rounded corners=5pt] (5.2,0.1) -- (5.2+1.35,0.1) -- (5.2+1.5,0.1+0.23) -- (5.2+1.5,.9-0.23) -- (5.2+1.35,.9) -- (5.2,.9) -- cycle;
\draw (5.2+1.5/2,0.5)node{$s_n^{j-1}$};

%% REGOLA (a,b) (c,d)
%\draw[rounded corners=5pt] (a,b) -- (a+1.35,b) -- (c,b+0.23) -- (c,d-0.23) -- (a+1.35,d) -- (a,d) -- cycle;
\end{tikzpicture}
\caption{Schematic representation of a multilane road. Here, the reference vehicle is $n$, traveling in lane $j$, while $p_n^{k}$ and $s_n^{k}$ represent the vehicles just behind and in front of vehicle $n$ in lane $k=j-1,j,j+1$, respectively.}\label{strada}
\end{figure}

The starting point for the derivation of the second-order macroscopic models presented in this work is the multilane BFtL microscopic hybrid model introduced in \cite{piu1}, which serves as the reference framework for the upscaling procedure.

We consider a single population of \(N\) vehicles of equal length \(l > 0\) moving on a road with \(J\) lanes, each characterized by a distinct desired velocity profile \(\tilde{V}_j(\cdot)\). For each vehicle \(n = 1, \dots, N\), we label the adjacent vehicles as illustrated in Fig.~\ref{strada}. Moreover, \(I_j(t)\) denotes the set of indices of vehicles in lane \(j\), ordered by position, while \(N_j(t)\) represents the number of vehicles in lane \(j\) at time \(t\).

In the same spirit as in \cite{piu1}, we consider the following second-order dynamics: the multilane model consists of \(2N\) ODEs arranged in \(J\) subsystems, where the dynamics for each vehicle \(n \in I_j\) are given by
\begin{equation}
\label{bftl_nlane}
\begin{cases}
\dot{x}_n = v_n \\
\dot{v}_n = a_j(x_n,v_n) \\
\text{Change lane to $j'$ if:} \\
\quad a_{j'}(x_n,v_n) > (1+\eta) a_j(x_n,v_n) \quad \text{(incentive criterion)}, \\
\quad \Delta x^{j'}_n > l + d_s \ \text{and} \ x_n - x_{p_n^{j'}} > l + d_s \quad \text{(safety criterion)}.
\end{cases}
\end{equation}with acceleration  $$a_j(x_n,v_n):= \alpha \left(\tilde{V}_j(\Delta x_n^j) - v_n\right) + \beta \dfrac{\Delta v_n^j}{(\Delta x_n^j-K)^{\gamma+1}} $$ and  \(\Delta x_n^j = x_{s_n^j} - x_n\), \(\Delta v_n^j = v_{s_n^j} - v_n\). Differing from the original model in \cite{piu1}, we have introduced the constant $K \in \{0,\, l + d_s\}$, which, similarly to what happens in \eqref{KBFTL}, allows us to consider both the ``classical'' and the ``modified'' version of the acceleration term.

Lane changes are assumed to be instantaneous and occur whenever both the incentive criterion, which requires a velocity gain with a factor \(\eta \geq 0\), and the safety criterion, which ensures a minimum safe distance \(d_s > 0\), are satisfied. Note that the factor \(\eta\) is newly introduced here, while the original model in \cite{piu1} corresponds to the case \(\eta = 0\). Moreover, the model incorporates a stochastic component to reflect the infrequent nature of lane changes, as highlighted by experimental studies \cite{studi1,studi2,hertyvisconti}. Therefore, the lane-changing dynamics are governed by two stochastic processes: an expected number of lane changes \(N_{lc}\) per unit of time is fixed, and \(N_{lc}\) candidate vehicles are randomly selected. The timing of each lane change is then randomly assigned. The first process can be described as a Bernoulli random variable over the set of vehicles, while the second is approximated by uniformly distributing the \(N_{lc}\) events over the unit time interval. As a result, a lane change is expected approximately every \(\tau_{lc} = \frac{1}{N_{lc}}\) seconds.

For all further details on the coupling between the hybrid system and the described stochastic processes, we refer the reader to the already cited work~\cite{piu1}.

%A summary of the coupling between the hybrid system and the described stochastic processes is provided in Algorithm \ref{alg}.
%
%\begin{algorithm}
%\caption{Bando-FtL Multilane Hybrid Model}\label{alg}
%\begin{algorithmic}[1]
%\State Initialize variables; set $\bar{t} \leftarrow 0$
%\While{Simulation not finished}
%    \If{$\bar{t} \geq \tau_{lc}$}
%        \State Select  randomly one vehicle $n$ 
%        \If{Lane-changing (LC) conditions are satisfied}
%            \State Perform lane change: $n$ moves from lane $j$ to $j'$
%            \State Update lane variables: $N_j, I_j, N_{j'}, I_{j'}$
%            \State Reset lane-changing timer: $\bar{t} \leftarrow 0$
%        \EndIf
%    \EndIf
%    \For{Each lane $j = 1, \dots, J$}
%        \For{Each vehicle $n \in N_j$}
%            \State Solve the system of equations in \eqref{bftl_nlane}
%        \EndFor
%    \EndFor
%    \State Advance time
%\EndWhile
%\end{algorithmic} 
%\end{algorithm}
\section{Two novel second-order macroscopic multilane models}\label{sez:macromod}

In this section, we present the derivation of two second-order macroscopic multilane models inspired by the microscopic dynamics described by the hybrid microscopic model \eqref{bftl_nlane}. Our aim is to ``translate'' the lane-changing processes we described in the microscopic model into macroscopic terms, providing the evolution laws of traffic through the usual aggregate variables of density and average speed. For this purpose, it is useful to provide a definition of a discrete density.
\begin{definition} \label{def:discrete:density} The local density in front of the $n$-th vehicle in lane $j$ at time $t$ is given by:
\begin{equation*}
	\rho^{(n)}_j(t):=\frac{l+d_s}{\Delta x^j_n}.
\end{equation*}
\end{definition}
Through this definition, we can now introduce the macroscopic variables of density and velocity for each lane as piecewise constant reconstructions of the local densities and velocities.

\begin{definition}\label{piece} The macroscopic densities and the mean velocities are given by:
\begin{equation*}
\rho_{j}(x,t):=\rho_{j}^{(n)}(t), \ v_j(x,t):=v_n(t), \quad n \in I_j, \, x \in [x_n,x_{s^{j}_n}), \, j=1,\dots,J,
\end{equation*}
where $I_j=I_j(t)$ is the label set of the vehicles in lane $j$. 
\end{definition}

Following the methodology outlined in \cite{piu2}, which developed an equation for the evolution of the density $\rho_j$ with lane changing, we intend to derive here an equation for the velocity $v_j$  using two different approaches, resulting in two models referred to as Model 1 and Model 2.

To this end, we distinguish between two different scenarios as consequence of a lane change, see also Fig.~\ref{scenario}:
\begin{description}
\item[Gain scenario.] A vehicle, which we will refer to as $n$, enters lane $j$ in front of vehicle $p^{j}_n$. See the left panel of Fig.~\ref{scenario}. This results in a gain for lane $j$ and in a loss for lane $j'$.
\item[Loss scenario.] The vehicle in front of vehicle $p^{j}_n$, denoted as $n$, leaves lane $j$. See the right panel of Fig.~\ref{scenario}. This results in a gain for lane $j'$ and in a loss for lane $j$.
\end{description}

In both models we will derive, the mass conservation equation includes a source term $S_j$, see eq.~\eqref{defS}, identical to that of the model in \eqref{macro_1_ord}, but with the key difference that velocity is now treated as a separate variable, independent of density.

Since the density equation is derived from mass conservation, the total mass remains a conserved quantity. Therefore, the gain and loss terms for the density equation are perfectly symmetric. In contrast, velocity is not conserved, requiring separate treatment for the gain and loss scenarios in the velocity equation.

It is worth pointing out that the derivation of a macroscopic system from microscopic dynamics is generally not uniquely defined, as it depends on the chosen closure variables. In the following, we will derive two distinct macroscopic models which differ precisely in this aspect: the first is closed by deriving a balance equation directly for the velocity (requiring specific assumptions on the post-transition speed), while the second is closed by tracking the evolution of the generalized momentum $y$, allowing the velocity to be recovered indirectly without further assumptions.

\begin{figure}[t]
\centering
\begin{tikzpicture}[scale=0.8]
\draw [dashed](0,3)--(6,3);
\draw[dashed](0,2)--(6,2);
\draw [dashed](0,1)--(6,1);
\draw [dashed](7,3)--(6+7,3);
\draw [dashed](7,2)--(6+7,2);
\draw [dashed](7,1)--(6+7,1);
\path (-1,1.5)node{lane $j'$};
\path (-1,2.5)node{lane $j$};
\draw[rounded corners=5pt] (0.5,2.1) -- (0.5+.9,2.1) -- (1.5,2.1+0.23) -- (1.5,2.9-0.23) -- (0.5+.9,2.9) -- (0.5,2.9) -- cycle;
\draw (0.5+1/2,2.5)node{$p_n^{j}$};
\draw[rounded corners=5pt] (4,2.1) -- (4+.9,2.1) -- (5,2.1+0.23) -- (5,2.9-0.23) -- (4+.9,2.9) -- (4,2.9) -- cycle;
\draw (4+1/2,2.5)node{$s_n^{j}$};
\draw[rounded corners=5pt] (2,1.1) -- (2+.9,1.1) -- (3,1.1+0.23) -- (3,1.9-0.23) -- (2+.9,1.9) -- (2,1.9) -- cycle;
\draw (2+1/2,1.5)node{$n$};
\draw[->,ultra thick](2+1/2,1.9)--(2+1/2+0.2,2.6);
\draw[rounded corners=5pt] (0.5+7,2.1) -- (0.5+7+.9,2.1) -- (1.5+7,2.1+0.23) -- (1.5+7,2.9-0.23) -- (0.5+7+.9,2.9) -- (0.5+7,2.9) -- cycle;
\draw (7+0.5+1/2,2.5)node{$p_n^{j}$};
\draw[rounded corners=5pt] (7+4,2.1) -- (7+4+.9,2.1) -- (7+5,2.1+0.23) -- (7+5,2.9-0.23) -- (7+4+.9,2.9) -- (7+4,2.9) -- cycle;
\draw (7+4+1/2,2.5)node{$s_n^{j}$};
\draw[rounded corners=5pt] (7+2,2.1) -- (7+2+.9,2.1) -- (7+3,2.1+0.23) -- (7+3,2.9-0.23) -- (7+2+.9,2.9) -- (7+2,2.9) -- cycle;
\draw (7+2+1/2,2.5)node{$n$};
\draw[->,ultra thick](7+2+1/2,2.1)--(7+2+1/2+0.2,2.1-0.7);
% REGOLA
%\draw (a,b) rectangle (c,d);
%\draw[rounded corners=5pt] (a,b) -- (a+.9,b) -- (c,b+0.23) -- (c,d-0.23) -- (a+.9,d) -- (a,d) -- cycle;

 % Etichette aggiunte per "GAIN SCENARIO" e "LOSS SCENARIO"
\node at (3, 3.4) {``GAIN''};
\node at (10, 3.4) {``LOSS''};
\end{tikzpicture}
\caption{Left: ``Gain scenario'', the vehicle $n$ enters lane $j$. Right: ``Loss scenario'', the vehicle $n$ leaves lane $j$. Note that since $x_n$ denotes the position of the rear of vehicle $n$, the space in front of it must be at least $l+d_s$.}\label{scenario}
\end{figure}
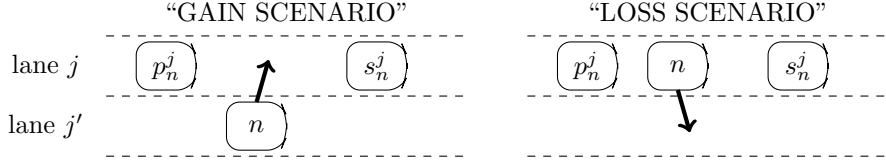

\subsection{Derivation of Model 1: evolution of the variable $v_j$ with lane changes}

The approach used to derive Model 1 involves analyzing the evolution of the velocity in the presence of a lane change, employing directly the motion equations provided by the microscopic model \eqref{bftl_nlane}.

\paragraph*{A discrete description}

Let us assume that at time $t$, a lane change occurs in front of a vehicle indexed as $p^{j}_n$, which is traveling in lane $j$, as depicted in Fig.~\ref{scenario}.

From the perspective of vehicle $p^{j}_n$, at time \( t^- \) (i.e., before the lane change occurs), its acceleration \( \ddot{x}_{p^{j}_n}(t^-) \) is determined by the interaction with the vehicle in front of it. After the lane change, at time \( t^+ \), this interaction changes because its leading vehicle has changed. In particular, from equations \eqref{bftl_nlane}, we have:
%\begin{align*}
%\ddot{x}_{p^{j}_n}(t^-) &= \alpha \left( \tilde{V}_{j}\left(x_{\star}(t^-)-x_{p^{j}_n}(t^-)\right)-v_{p^{j}_n}(t^-)\right) + \beta \left( \dfrac{v_{\star}(t^-)-v_{p^{j}_n}(t^-)}{(x_{\star}(t^-)-x_{p^{j}_n}(t^-)-K)^{\gamma+1}} \right),\\
%\ddot{x}_{p^{j}_n}(t^+) &= \alpha \left( \tilde{V}_{j}\left(x_{\star \star}(t^+)-x_{p^{j}_n}(t^+)\right)-v_{p^{j}_n}(t^+)\right) + \beta \left( \dfrac{v_{\star \star}(t^+)-v_{p^{j}_n}(t^+)}{(x_{\star \star}(t^+)-x_{p^{j}_n}(t^+)-K)^{\gamma+1}} \right).
%\end{align*}
\begin{align*}
\ddot{x}_{p^{j}_n}(t^-) &= \alpha \left( \tilde{V}_{j}\left(x_{\star}(t^-)-x_{p^{j}_n}(t^-)\right)-v_{p^{j}_n}(t^-)\right) \\
&\quad + \beta \left( \dfrac{v_{\star}(t^-)-v_{p^{j}_n}(t^-)}{(x_{\star}(t^-)-x_{p^{j}_n}(t^-)-K)^{\gamma+1}} \right), \\
\ddot{x}_{p^{j}_n}(t^+) &= \alpha \left( \tilde{V}_{j}\left(x_{\star\star}(t^+)-x_{p^{j}_n}(t^+)\right)-v_{p^{j}_n}(t^+)\right) \\
&\quad + \beta \left( \dfrac{v_{\star\star}(t^+)-v_{p^{j}_n}(t^+)}{(x_{\star\star}(t^+)-x_{p^{j}_n}(t^+)-K)^{\gamma+1}} \right).
\end{align*}
where in the Gain case we have $\star=s^{j}_n$ and $\star \star=n$, whereas in the Loss case we have $\star=n$ and $\star \star=s^{j}_n$.

In order to compute the difference \( \Delta \ddot{x}_{p^{j}_n} := \ddot{x}_{p^{j}_n}(t^+) - \ddot{x}_{p^{j}_n}(t^-) \), we formulate some assumptions that allow us to express all variables at time \( t^+ \) in terms of the variables at time \( t^- \).

\begin{description}
\item[(V1)] Since \( p^{j}_n \) does not change lane at time \( t \), its position is certainly a continuous function, thus \( x_{p^{j}_n}(t^+) = x_{p^{j}_n}(t^-)=x_{p^{j}_n}(t) \). The same holds for \( s^{j}_n \), which cannot be influenced by the movement of $n$.

\item[(V2)] When the lane change takes place, the velocity of vehicle \( p^{j}_n \) does not change instantaneously, so \( v_{p^{j}_n}(t^+) = v_{p^{j}_n}(t^-)=v_{p^{j}_n}(t) \). The same holds for \( s^{j}_n \).

\item[(V3)] At the moment the lane change occurs, the density in front of the vehicle \( p_n^j \) changes instantaneously. Consequently, we assume that the primary factor contributing to the change in acceleration is the relaxation term. At this stage, we neglect the interaction with the vehicle \( \star \star \), setting \( \beta = 0 \) at time \( t^+ \). Indeed,  while the increase in density of the gain term requires an immediate reaction (braking), the sudden decrease in density brought by the loss term does not imply an instantaneous acceleration. In fact, an immediate braking is required by safety.

\end{description}

The quantity \( \Delta \ddot{x}_{p^{j}_n} \) can therefore be expressed as follows:
\begin{equation}\label{differenza_acc}
\begin{split}
\Delta \ddot{x}_{p^{j}_n} &= \alpha \Bigl( V_{j}\bigl( \rho_{j}^{(p^{j}_n)}(t^+) \bigr) - V_{j}\bigl( \rho_{j}^{(p^{j}_n)}(t^-) \bigr) \Bigr) \\
&\quad -\beta \frac{v_{\star}(t^-) - v_{p^{j}_n}(t^-)}{(l+d_s)^{\gamma+1}} \left( \frac{1}{\rho_{j}^{(p^{j}_n)}(t^-)} - \frac{K}{l+d_s} \right)^{-(\gamma+1)}.
\end{split}
\end{equation}
%\begin{equation}\label{differenza_acc}
%\Delta \ddot{x}_{p^{j}_n} = \alpha \left( V_{j}\left( \rho_{j}^{(p^{j}_n)}(t^+) \right) - V_{j}\left( \rho_{j}^{(p^{j}_n)}(t^-) \right) \right) -\beta \frac{ v_{\star}(t^-) - v_{p^{j}_n}(t^-) }{ (l+d_s)^{\gamma+1} \left( \frac{1}{\rho_{j}^{(p^{j}_n)}(t^-)} - \frac{K}{l+d_s} \right)^{\gamma+1} }. 
%\end{equation}
Notice that in the previous expression we have replaced $\tilde{V}_j(\cdot)$ with $V_j(\cdot)$ which is a function of the density such that
$$
    V_j(\rho_j^{(n)}(t)) = \tilde{V}_j(\Delta x_n^j(t)), \quad \forall n\in I_j, \ j\in\{1,\dots,J\}, \ t\in\mathbb{R}^+.
$$
This choice is made with the aim of computing the macroscopic limit, where the main variable is the density. In equation~\eqref{differenza_acc} we can specify the density at $t^-$ using Definition~\ref{def:discrete:density}:
$$
\rho_{j}^{(p^{j}_n)}(t^-) = \frac{l+d_s}{x_\star(t^-)-x_{p^{j}_n}(t^-)}.
$$
Instead, to specify the density at time $t^+$ we make an additional assumption for the Gain case following~\cite{piu2}:
\begin{description}
\item[(V4)] Since the position of \( n \) in the new lane can fall within the interval \( \left[x_{p^{j}_n}(t^-) + l + d_s, \, x_{s^{j}_n}(t^-) - l - d_s \right] \), we assume that ${x}_n(t^+) $ is a convex combination of the positions of vehicles \( p_n^j \) and \( s_n^j \). Therefore, we write:  
$ {x}_n(t^+) := \lambda \big(x_{s^{j}_n}(t^-) - l - d_s\big) + (1-\lambda) \big(x_{p^{j}_n}(t^-) + l + d_s\big)$, where the parameter $ \lambda \in [0,1]$ is assumed to depend on the density in the incoming lane of vehicle \( n \), specifically \( \lambda = \lambda(\rho_{j'}^{(n)}(t^-)) \).
\end{description} With this choice, we can easily express the densities at time $t^+$. In fact, as shown in \cite{piu2}, the density \( \rho_{j}^{(p^{j}_n)}(t^+) \) can be computed as
\begin{equation}\label{nuovadens}
\rho_{j}^{(p^{j}_n)}(t^+) =
\begin{cases} 
\rho_{j}^{(p^{j}_n)}(t^-) + G\left( \rho_{j'}^{(n)}(t^-), \rho_{j}^{(p^{j}_n)}(t^-) \right) \rho_{j}^{(p^{j}_n)}(t^-) & \text{(Gain case)}, \\[10pt]
\rho_{j}^{(p^{j}_n)}(t^-) - G\left( \rho_{j}^{(n)}(t^-), \rho_{j'}^{(n)}(t^-) \right) \rho_{j'}^{(n)}(t^-) & \text{(Loss case)},
\end{cases}
\end{equation}
%for the Gain case as:
%\begin{equation}\label{ndg}
%\rho_{j}^{(p^{j}_n)}(t^+) = \rho_{j}^{(p^{j}_n)}(t^-) + A\left( \rho_{j'}^{(n)}(t^-), \rho_{j}^{(p^{j}_n)}(t^-) \right) \rho_{j}^{(p^{j}_n)}(t^-),
%\end{equation}
%whereas, using the conservation of the mass, for the Loss case:
%\begin{equation}\label{ndl}
%\rho_{j}^{(p^{j}_n)}(t^+)=
%\rho_{j}^{(p^{j}_n)}(t^-)- A\left(\rho_{j}^{(n)}(t^-),\rho_{j'}^{(n)}(t^-)\right)\rho_{j'}^{(n)}(t^-).
%\end{equation}
Note that, to avoid any ambiguity with the notation for the matrix $A(U_j)$ introduced later in Section 3.4, we denote this amplification factor by $G$, which corresponds to the function denoted by $A$ in \cite{piu2}. For a possible choice for $G$ see eq.~\eqref{sceltaA}.

\paragraph*{A continuous description}\label{continuum}
Using the macroscopic variables \(\rho_j\) and \(v_j\) defined in Definition \ref{piece}, equation~\eqref{differenza_acc} can be reformulated as follows. Let us focus, for the moment, on the Gain case, i.e., \(\star = s^{j}_n\). The microscopic acceleration \(\ddot{x}_{p^{j}_n}(t)\) can be expressed as \(\frac{d}{dt}v_j(x_{p^{j}_n}(t),t) = \partial_t v_j + v_j \partial_x v_j\), where \(v_j \equiv v_j(x_{p^{j}_n}(t), t) = v_{p^{j}_n}(t)\). The velocity difference appearing in the term multiplied by \(\beta\) in \eqref{differenza_acc} can be rewritten as: \[
\begin{aligned}
    v_{s^{j}_n}(t) - v_{p^{j}_n}(t) &= \frac{\textnormal{d}}{\textnormal{d}t}\left(x_{s^{j}_n}(t) - x_{p^{j}_n}(t)\right) 
    = \frac{\textnormal{d}}{\textnormal{d}t}\left(\frac{l + d_s}{\rho_{j}^{(p^{j}_n)}(t)}\right) \\
    &\stackrel{\text{(A)}}{=} -\frac{l + d_s}{\rho_j^2}\left(\partial_t \rho_j + v_j \partial_x \rho_j \right) 
    \stackrel{\text{(B)}}{=} \frac{l + d_s}{\rho_j} \partial_x v_j,
\end{aligned}
\] where in step (A) we have set \(\rho_j \equiv \rho_j(x_{p^{j}_n}(t), t) = \rho_{j}^{(p^{j}_n)}(t)\), while step (B) follows from the mass conservation equation. Additionally, we denote the macroscopic density after the lane change as \(\rho_j^G\). Consequently, the right-hand-side of equation~\eqref{differenza_acc} can be rewritten in terms of macroscopic variables as:
\begin{equation}\label{intermedio}
\alpha\left(V_{j}\left(\rho_j^G\right) - V_{j}\left(\rho_{j}\right)\right) - \frac{\beta}{(l + d_s)^\gamma}\frac{1}{\rho_j}\frac{1}{\left(\frac{1}{\rho_j} - \frac{K}{l + d_s}\right)^{\gamma+1}} \partial_x v_j.
\end{equation}

Now, defining the pressure function \(P(\rho_j)\) as:\begin{equation}\label{defP}
P(\rho_j) = \begin{cases}
\frac{\beta}{\gamma (l + d_s)^\gamma} \rho_j^\gamma, & \text{if } K = 0,\\
\frac{\beta}{\gamma (l + d_s)^\gamma} \left(\frac{1}{\rho_j} - \frac{1}{\rho^{\text{max}}}\right)^{-\gamma}, & \text{if } K = l + d_s,
\end{cases}
\end{equation} equation \eqref{intermedio} can be further rewritten in terms of the derivative of this pressure function:
\begin{equation}\label{sorgenti_modello1_G}
\mathcal{G}^v(\rho_j, \rho_{j'}, \partial_x v_j): = \alpha\left(V_{j}\left(\rho_j^G\right) - V_{j}\left(\rho_{j}\right)\right) - \rho_j P'(\rho_j) \partial_x v_j.
\end{equation}  This expression provides insight into the acceleration variation in the gain case for lane \(j\).

Repeating the same analysis for the loss case, from equation~\eqref{differenza_acc} we can define
\begin{equation}\label{sorgenti_modello1_B}
\mathcal{L}^v(\rho_j,\rho_{j'},\de_x v_j) := \alpha\left(V_{j}\left(\rho_j^L\right)-V_{j}\left(\rho_{j}\right)\right) -\rho_jP'(\rho_j)\de_x v_j.
\end{equation}
where $\rho_j^L$ (and $\rho_j^G$) represents the macroscopic counterparts of \eqref{nuovadens}:\begin{equation*}
\begin{array}{l}
\rho_j^G=\rho_j+G(\rho_{j'},\rho_j)\rho_j,\\
\rho_j^L=\rho_j-G(\rho_{j},\rho_{j'})\rho_{j'}.
\end{array}
\end{equation*}

We now define two random variables $X\sim\text{Ber}(\pi^{j' \to j})$ and $Y\sim\text{Ber}(\pi^{j \to j'})$  with Bernoulli distribution, where $\pi^{h \to k}$ is the probability of a lane change from lane $h$ to lane $k$ within a unit of time. Hence, $X$ will represent lane changes directed towards lane $j$, while $Y$ will denote lane changes originating from lane $j$.

Using a Taylor expansion for the time derivative with $\Delta t>0$ in the second equation of \eqref{eq:arz} for a multilane flow, and considering the stochastic events due to the lane changes, the evolution of the velocity at time $t+\Delta t$ is provided by the stochastic variable $\tilde{v}_j(x, t+\Delta t; X, Y)$, which accounts for the possible lane-changing events through the Bernoulli random variables $X$ and $Y$: \begin{equation*}\label{casi_vel}
\begin{split}
\tilde{v}_j(x,t+\Delta t;X,Y) = v_j &+ \Delta t \big( - \left(v_j-\rho_j P'(\rho_j) \right) \partial_x v_j \\
&\quad + \alpha(V_j(\rho_j)-v_j) + X \cdot \mathcal{G}^v + Y\cdot \mathcal{L}^v \big) 
+ o(\Delta t),
\end{split}
\end{equation*} where $o(\Delta t)$ denotes the standard Landau notation for higher-order terms such that $\lim_{\Delta t \to 0} \frac{o(\Delta t)}{\Delta t} = 0$. Computing the  expected value of $\tilde{v}_j$  we obtain
\begin{equation*} 
\begin{split}
v_j (x,t+\Delta t)
&= \mathbb{E}[\tilde{v}_j(x,t+\Delta t;X,Y)]  \\ 
&= v_j + \Delta t \big( -\left(v_j-\rho_j P'(\rho_j) \right) \partial_x v_j 
\\
&\quad + \alpha(V_j(\rho_j)-v_j) 
+ \pi^{j' \to j} \mathcal{G}^v 
+ \pi^{j \to j'} \mathcal{L}^v \big) 
+ o(\Delta t),
\end{split}
\end{equation*} and finally, passing to the limit for $\Delta t \to 0$ we get the equation
\begin{equation*}
\de_t v_{j} +  \left(v_j-\rho_j P'(\rho_j) \right) \de_x v_j=\alpha(V_j(\rho_j)-v_j)+ \pi^{j' \to j} \mathcal{G}^v + \pi^{j \to j'} \mathcal{L}^v,
\end{equation*} which describes the average evolution of velocity in the presence of lane changes, accounting for both gain and loss events. Explicitly expressing the terms $\mathcal{G}^v$ and $\mathcal{L}^v$ as in Eqs. \eqref{sorgenti_modello1_G} and \eqref{sorgenti_modello1_B}, the previous equation becomes: \begin{equation*}
\partial_t v_{j} +  \left(v_j - \rho_j P'(\rho_j)(1-C_j) \right) \partial_x v_j = \alpha(V_j(\rho_j) - v_j + R_j),
\end{equation*} 
where $C_j=C_j(\rho_{j-1},\rho_{j},\rho_{j+1},v_{j-1},v_j,v_{j+1})$, $R_j=R_j(\rho_{j-1},\rho_{j},\rho_{j+1},v_{j-1},v_j,v_{j+1})$, and
\begin{equation} \label{defCR}
\begin{aligned}
C_j &= \sum\limits_{j' \in T_j}\left(\pi^{j' \to j} + \pi^{j \to j'}\right), \\
R_j &= \sum\limits_{j' \in T_j}\left(\pi^{j' \to j}\left(V_{j}\left(\rho_j^G\right) - V_{j}\left(\rho_{j}\right)\right) + \pi^{j \to j'}\left(V_{j}\left(\rho_j^L\right) - V_{j}\left(\rho_{j}\right)\right)\right), 
\end{aligned}
\end{equation} 
for $(x,t) \in \mathbb{R} \times \mathbb{R}^+$ and for all $j = 1, \dots, J$.

With the notation $T_j$ we denote the set of the interacting lanes:  $T_j = \{j-1, j+1\} \text{ for } 1 < j < J, \quad T_1 = \{2\}, \quad T_J = \{J-1\}$.

This equation, together with equation \eqref{macro_1_ord} and with $f_j(\rho_j,v_j)=\rho_j v_j$, forms the second-order macroscopic model, which we refer to as the \textit{Model 1}. It should be emphasized that, in the source term of the density equation, velocity is now an independent variable, distinct from density, and evolves according to its own evolution equation derived in this section. Consequently, the source term $S_j$ takes the form $S_j=S_j(\rho_{j-1},\rho_{j},\rho_{j+1},v_{j-1},v_j,v_{j+1})$.

\subsection{Derivation of Model 2: evolution of the variable $y_j$ with lane changes}
We now propose a different approach to derive a second-order macroscopic multilane model, starting from the underlying microscopic dynamics. It is known that the macroscopic limit of the system \eqref{KBFTL} corresponds to the classical ARZ model \eqref{eq:arz}, which can be rewritten in conservative form as in \eqref{eq:arz_cons}. The core idea is to study the evolution of \( y \) in scenarios involving lane changes, and to derive a corresponding macroscopic equation that can be coupled with the density evolution equation, in a manner similar to the approach used for Model 1.

To begin this analysis, we first consider the discretized representation of the variable \( y \).

\begin{definition} 
The local value \( y_j^{(n)} \) associated with vehicle \( n \) in lane \( j \) at time \( t \) is defined as:
\begin{equation*}
y_j^{(n)} := \rho_j^{(n)}(t) \left( v_n(t) + P\left(\rho_j^{(n)}(t)\right) \right).
\end{equation*}
\end{definition}

Given that the system typically includes a relaxation term (\(\alpha \neq 0\)), \( y \) is not a conserved variable. Therefore, we analyze the gain and loss terms separately to better understand their effects on the evolution of \( y \).

\paragraph*{A discrete description}
Let us consider the two scenarios depicted in Fig.~\ref{scenario}. In both cases, the difference $\Delta y_j^{p^j_n} := y_j^{(p^{j}_n)}(t^+) - y_j^{(p^{j}_n)}(t^-)$ is defined as:
\begin{equation}\label{ydiff}
\Delta y_j^{p^j_n} = \rho_j^{(p^{j}_n)}(t^+) \left( v_{p^{j}_n}(t^+)+P\left(\rho_j^{(p^{j}_n)}(t^+)\right)\right)-y_j^{(p^{j'}_n)}(t^-)
\end{equation} where at time $t^+$ the density is computed as in equation \eqref{nuovadens}. Similarly to what we have assumed for Model 1, we consider the assumptions (V1), (V2), and (V4) to be valid. Note that, in this case, by working directly with \( y_j \), we do not need the assumption (V3) that is required for the analysis conducted in the study of the evolution of the velocity. In fact, (V3) allowed us to avoid assumptions on the velocity $v_{\star\star}$ at time $t^+$.

\paragraph*{A continuous description}
In a manner similar to the derivation of Model 1, we can express the right-hand side of equation \eqref{ydiff} in terms of macroscopic variables $\rho_j, v_j, y_j$, yielding two terms that describe the variation of $y$ in the presence of lane-changing events, specifically in the gain and loss cases, respectively:
\begin{align*}
\mathcal{G}^y(\rho_j,\rho_{j'},v_j) &:=\rho_{j}^G \left(v_j+P\left(\rho_j^G\right)\right)-y_j\\
\mathcal{L}^y(\rho_j,\rho_{j'},v_j) &:= \rho_{j}^L \left(v_j+P\left(\rho_j^L\right)\right)-y_j
\end{align*}

We consider the two Bernoullian random variables $X \sim \text{Ber}(\pi^{j' \to j})$ and $Y \sim \text{Ber}(\pi^{j \to j'})$ as before, and we perform a Taylor expansion for the time derivative with $\Delta t > 0$ in the second equation of \eqref{eq:arz_cons}. This results in a stochastic evolution given by
\begin{equation*}\label{casi_vel2}
\begin{split}
\tilde{y}_j(x,t+\Delta t;X,Y) 
&= y_j + \Delta t \big( -\partial_x (y_j v_j) + \alpha \rho_j (V(\rho_j) - v_j) \\
&\quad + X \cdot \mathcal{G}^y + Y \cdot \mathcal{L}^y \big) + o(\Delta t),
\end{split}
\end{equation*} By computing the expected value of $\tilde{y}_j$ and taking the limit as $\Delta t \to 0$, we obtain:
\begin{equation*}
\partial_t y_j + \partial_x (y_j v_j) = \alpha \rho_j (V(\rho_j) - v_j) + \pi^{j' \to j} \mathcal{G}^y + \pi^{j \to j'} \mathcal{L}^y.
\end{equation*} Furthermore, recalling the definition of $y_j$, and incorporating the explicit expressions for $\mathcal{G}^y$ and $\mathcal{L}^y$, the equation can be rewritten as:

\begin{equation}\label{finalemod2}
\partial_t y_{j} +  \partial_x \left(\frac{y_j^2}{\rho_j}-y_j P(\rho_j)  \right) = \alpha \rho_j \left(V_j(\rho_j) - \frac{y_j}{\rho_j}+P(\rho_j)\right) + Q_j,
\end{equation} where $Q_j=Q_j(\rho_{j-1},\rho_{j},\rho_{j+1},v_{j-1},v_j,v_{j+1})$ with \begin{equation}\label{defQ}
Q_j = \sum\limits_{j' \in T_j} \left( \pi^{j' \to j} \left( \rho_{j}^G \left( v_j + P(\rho_j^G) \right) - y_j \right) + \pi^{j \to j'} \left( \rho_{j}^L \left( v_j + P(\rho_j^L) \right) - y_j \right) \right),
\end{equation}
for $(x,t) \in \mathbb{R} \times \mathbb{R}^+$, and for all $j = 1, \dots, J$. This equation, in conjunction with equations \eqref{macro_1_ord} and \eqref{defS} and with the flux function $f_j(\rho_j, y_j) = y_j - \rho_j P(\rho_j)$, forms the second-order macroscopic model, referred to as \textit{Model 2}.

It is noteworthy that the hyperbolic part of this model is already written in a conservative form. However, it is also possible to rewrite the equations in terms of the variables $\rho_j$ and $v_j$. In this case, equation \eqref{finalemod2} becomes:

\begin{equation*}
\partial_t v_{j} +  (v_j - \rho_j P'(\rho_j)) \partial_x v_j = \alpha (V_j(\rho_j) - v_j) + \frac{Q_j}{\rho_j} - \nu S_j \left( P'(\rho_j) + \frac{v_j + P(\rho_j)}{\rho_j} \right).
\end{equation*}

\subsection{Lane changing conditions} The probability of changing lanes is influenced by the conditions governing lane changes and can be modeled similarly to the approach in \cite{piu2}. In this context, we express these conditions in terms of macroscopic variables, namely density and velocity. Thus, the macroscopic counterpart of the rules governing the lane change from lane \(j\) to lane \(j'\), derived from the microscopic model in \eqref{bftl_nlane}, is as follows: \begin{equation*}\label{cambimicrose}
\begin{array}{ll}
v_{j'} > (1+\eta)v_{j} & \text{(incentive criterion)} \\
\rho_{j'} < \mu := \frac{1}{2} & \text{(safety criterion)} 
\end{array}
\end{equation*} where \(\mu\) represents a critical density beyond which lane changes become infeasible, because lane \(j'\) lacks sufficient space to accommodate additional vehicles. We observe that the incentive condition in the macroscopic formulation is based on velocities.

Therefore, the probability of jumping from lane $j$ to lane $j'$, $\tilde \pi^{j\to j'}$, is defined as
\begin{equation} \label{def:prob}
\begin{aligned}
\tilde \pi^{j \to j'}(\rho_{j'},\rho_{j}) &= g(\rho_{j'})\, \mathbf{1}_{lc}(\rho_{j'},\rho_j,v_j,v_{j'}), \\
\mathbf{1}_{lc}(\rho_{j'},\rho_j,v_j,v_{j'}) &= \max\left\{0, \min\left\{ \sign\left(v_{j'} - (1+\eta)v_{j}\right), \sign\left(\mu -\rho_{j'}\right) \right\}\right\}.
\end{aligned}
\end{equation} and $g$ is a smooth function that monotonically decreases with its argument, expressing the percentage of mass that actually changes lanes.

We observe that if for some \((x,t)\) the probability \(\tilde \pi^{j' \to j}\) is strictly positive, then from the incentive criterion it necessarily follows that the probability \(\tilde \pi^{j \to j'}\) is equal to zero.

In the case \(J > 2\), excluding lanes 1 and \(J\), four possible events can occur for a given state \((x, t)\): a lane change to either lane \(j+1\) or \(j-1\), a lane change from either lane \(j+1\) or \(j-1\), or the trivial event where no lane change occurs. Consequently, we assign four probabilities to each lane: two ``exit'' probabilities \(\tilde \pi^{j \to j'}\) and two ``entry'' probabilities \(\tilde \pi^{j' \to j}\), where \(j' \in T_j\). If more than one of these probabilities is strictly positive, the highest one is prioritized and the others are set to zero. Then, for $(h,k)\in\{(j,j+1),(j,j-1),(j-1,j),(j+1,j)\}$ we define
\begin{equation} \label{eq:probab}
    \pi^{h \to k}=\begin{cases}
    \tilde \pi^{h \to k} & \text{if } \tilde \pi^{h \to k}=\pi^{\text{max}}\\ 0 & \text{otherwise}\end{cases}
\end{equation}
where $\pi^{\text{max}}=\max\{\tilde \pi^{j \to j-1},\tilde \pi^{j \to j+1},\tilde \pi^{j-1 \to j},\tilde \pi^{j+1 \to j}\}$.

In case $\pi^{\text{max}}$ is not uniquely-defined, we will adopt the following rules.
\begin{itemize}
\item When a vehicle is equally likely to move left or right, it will be more likely to shift to the left: if \(\tilde \pi^{j \to {j-1}}=\tilde \pi^{j \to {j+1}}\), then \(\tilde \pi^{j \to {j-1}}=0\), meaning we prioritize the lane change to the left lane.
\item If incoming traffic from the left and right lanes has equal probability, preference is given to cars entering from the right: if \(\tilde \pi^{{j-1} \to j}=\tilde \pi^{{j+1} \to j}\), then \(\tilde \pi^{j \to {j+1}}=0\), meaning we prioritize lane changes originating from the right lane.
\item When the probability for an outgoing lane change matches the incoming flow, priority is given to outgoing changes, meaning vehicles will prefer to leave their current lane rather than allowing new traffic in: if \(\tilde \pi^{{j+1} \to {j}} = \tilde \pi^{j \to {j-1}}\), then \(\tilde \pi^{j \to {j+1}}=0\), and if \(\tilde \pi^{{j-1} \to {j}} = \tilde \pi^{j \to {j+1}}\), then \(\tilde \pi^{j \to {j-1}}=0\), meaning we prioritize the outgoing mass changes.
\end{itemize}

\subsection{The models}
The overall second-order models, derived in the previous sections, are given by the following $2J \times 2J$ systems. Employing the non-conservative variable $v_j$, we can write both Model 1 and Model 2 in quasi-linear form as
\begin{equation*}
	\partial_t U + \mathcal{A}(U) \partial_x U = \mathcal{B}(U),\, \text{ on } \ (x,t)\in\mathbb{R}\times \mathbb{R}^+.
\end{equation*}
where
\begin{equation*}
U=\begin{pmatrix}
U_1  \\ 
\vdots  \\ 
U_j \\
 \vdots  \\  
U_J\\ 
\end{pmatrix},\,
\mathcal{A}(U)=\begin{pmatrix}
A(U_1) & & & & 0 \\ 
& \ddots & & & \\ 
& & A(U_j) & & \\
& & & \ddots & \\  
0 & &  & & A(U_J)\\ 
\end{pmatrix},\,
\mathcal{B}(U)=\begin{pmatrix}
B(U_1)  \\ 
\vdots  \\ 
B(U_j) \\
 \vdots  \\  
B(U_J)\\ 
\end{pmatrix}
\end{equation*} with $U_j=(\rho_j,v_j)^T$. The two models are then specified as
%&\begin{subequations}\label{mat_modelli}
	\begin{equation} \label{eq:model1:noncons}
	\begin{array}{ll}
		\hspace{-2cm} \text{\underline{Model 1}}  &  A(U_j)= \begin{pmatrix} 
			v_j & \rho_j \\ 
			0 & v_j-\rho_j P'(\rho_j)\left( 1- C_j\right) \\ 
		\end{pmatrix}\\[4ex]
		&
		B(U_j)= \begin{pmatrix} 
			\nu S_j \\ 
			\alpha(V_j(\rho_j)-v_j+R_j) \\ 
		\end{pmatrix}
	\end{array}
	\end{equation}
	\begin{equation} \label{eq:model2:noncons}
		\begin{array}{ll}
		\hspace{-0.6cm} \text{\underline{Model 2}}   & A(U_j)= \begin{pmatrix} 
			v_j & \rho_j \\ 
			0 & v_j-\rho_j P'(\rho_j) \\ 
		\end{pmatrix} \\[4ex] & B(U_j)= \begin{pmatrix} 
			\nu S_j \\ 
			\alpha (V_j(\rho_j) - v_j) + \frac{Q_j}{\rho_j}-\nu S_j \left(P'(\rho_j)+\frac{v_j+P(\rho_j)}{\rho_j}\right) \\ 
		\end{pmatrix}\\
	\end{array}
	\end{equation}
%\end{subequations}

We note that in both models, the matrix $\mathcal{A}(U)$ is a block matrix on the main diagonal, where each $2 \times 2$ block is upper triangular.

To complete the definition of the models, we recall the definitions of the pressure $P$ in eq. \eqref{defP}, and the terms $C_j,R_j$ in eq. \eqref{defCR}, and $Q_j$ in eq. \eqref{defQ}. 

As in \cite{piu2}, the term $S_j$ is given by:
\begin{equation}\label{defS}
S_j  = \sum\limits_{j' \in T_j}\left(\pi^{j' \to j} G(\rho_{j'},\rho_{j})\rho_{j} -  {\pi^{j \to j'}}  G(\rho_{j},\rho_{j'})\rho_{j'}\right),
\end{equation} where the function \( G(\cdot,\cdot) \) denotes the amplification factor of the density in the target lane.  Specifically, it is defined as:\begin{equation} \label{sceltaA}
G(\rho_{h},\rho_{k}) = \left(\left(\lambda(\rho_{h}) + (1-2\lambda(\rho_{h}))\rho_{k}\right)^{-1}-1 \right),
\end{equation} 
with $\lambda(\rho)=1-\frac{\rho}{\rho^{\text{max}}}$.

The probability $\pi^{h \to k}$ is defined as in~\eqref{def:prob}-\eqref{eq:probab}, using the linear function $g(\rho)=1-2\frac{\rho}{\rho^{\text{max}}}$.

\paragraph*{Eigenstructure}
The eigenstructure of the two models can be easily recovered by the quasi-linear form and it is given by the $2J\times 2J$ matrices $\Lambda$ and $\Psi$ of the eigenvalues and associated eigenvectors, respectively:
\begin{equation*}
\Lambda=\begin{pmatrix}
\Lambda_1 & & & & 0 \\ 
& \ddots & & & \\ 
& & \Lambda_j & & \\
& & & \ddots & \\  
0 & &  & & \Lambda_J\\ 
\end{pmatrix}, \quad
\Psi = \begin{pmatrix}
\Psi_1 & & & & 0 \\ 
& \ddots & & & \\ 
& & \Psi_2 & & \\
& & & \ddots & \\  
0 & &  & & \Psi_J\\ 
\end{pmatrix},
\end{equation*}
where \begin{equation}\label{auto_tutto}
\begin{array}{lll}
\text{\underline{Model 1}}  &  \Lambda_j=\text{diag}\left(  v_j-\rho_jP'(\rho_j)(1-C_j), v_j \right),
 &  \Psi_j=\begin{pmatrix}
1 & 1 \\ -P'(\rho_j)(1-C_j) & 0 \end{pmatrix}, \\[5ex]
\text{\underline{Model 2}}   & \Lambda_j=\text{diag}\left(  v_j-\rho_jP'(\rho_j), v_j \right),    
& \Psi_j=\begin{pmatrix}
1 & 1 \\ -P'(\rho_j) & 0
\end{pmatrix}. \\
\end{array}
\end{equation}

\paragraph*{Hyperbolicity and anisotropy}
Note that the homogeneous part of Model 2 coincides with that of the ARZ model, thereby inheriting the properties of being a hyperbolic system for non-negative densities, and satisfying the anisotropy condition, namely $v_j-\rho_jP'(\rho_j)\leq v_j$. As for Model 1, we can demonstrate such properties as follows.
\begin{proposition}
	Model 1 is hyperbolic, and satisfies the anisotropic condition.
\end{proposition}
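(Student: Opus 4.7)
The plan is to exploit the block-diagonal structure of $\mathcal{A}(U)$ and reduce the problem to analyzing each $2 \times 2$ block $A(U_j)$ in isolation. Since every block is upper triangular, its eigenvalues coincide with the diagonal entries $\lambda_1 = v_j - \rho_j P'(\rho_j)(1-C_j)$ and $\lambda_2 = v_j$, both of which are automatically real. The associated right eigenvectors are those recorded in \eqref{auto_tutto}, and they are linearly independent whenever $\det \Psi_j = P'(\rho_j)(1-C_j)$ is nonzero. Consequently both hyperbolicity and the anisotropy condition $\lambda_1 \leq \lambda_2$ reduce to showing $1 - C_j \geq 0$, with strict inequality giving strict hyperbolicity.

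The key technical estimate is therefore the bound $C_j \leq 1$, where $C_j = \sum_{j' \in T_j}(\pi^{j' \to j} + \pi^{j \to j'})$. I would proceed in two moves. First, each raw probability $\tilde\pi^{h \to k}$ defined in \eqref{def:prob} satisfies $\tilde\pi^{h \to k} \in [0,1]$, because $g(\rho) = 1 - 2\rho/\rho^{\text{max}}$ is bounded above by $1$ and the indicator $\mathbf{1}_{lc}$ takes values in $\{0,1\}$. Second, the selection rule \eqref{eq:probab} together with the tie-breaking conventions listed immediately afterwards guarantees that, at any fixed state $(x,t)$, at most one of the four local probabilities $\pi^{j-1 \to j}, \pi^{j+1 \to j}, \pi^{j \to j-1}, \pi^{j \to j+1}$ is strictly positive. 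Hence $C_j$ collapses to the single surviving value $\pi^{\text{max}} \in [0,1]$, which yields the desired bound.

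Once $C_j \in [0,1]$ is established, the remaining conclusions follow directly. For anisotropy, note that $\rho_j \geq 0$ and $P'(\rho_j) \geq 0$ by the monotonicity of the pressure \eqref{defP} in both regimes $K=0$ and $K=l+d_s$, hence $\rho_j P'(\rho_j)(1-C_j) \geq 0$ and therefore $\lambda_1 \leq v_j = \lambda_2$. For hyperbolicity, the same nonnegativity yields two real eigenvalues, and $\det \Psi_j > 0$ whenever $\rho_j > 0$ and $C_j < 1$, producing strict hyperbolicity in the physically relevant regime. The two borderline loci, namely $\rho_j = 0$ (vacuum) and the exceptional set where $C_j = 1$, are analogous to the standard ARZ degeneracy at vacuum and are handled under the customary convention ``hyperbolic for strictly positive density''.

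The main obstacle is the careful verification that the tie-breaking rules actually eliminate all but one of the four probabilities. This is a mechanical case analysis over the orderings of $\tilde\pi^{j-1 \to j}, \tilde\pi^{j+1 \to j}, \tilde\pi^{j \to j-1}, \tilde\pi^{j \to j+1}$, substantially simplified by the remark already recorded after \eqref{def:prob} that the incentive criterion prevents $\tilde\pi^{j' \to j}$ and $\tilde\pi^{j \to j'}$ from being simultaneously positive. The boundary lanes $j=1$ and $j=J$ involve a reduced interacting set $T_j$ with a single neighbour, and are strictly simpler to handle.
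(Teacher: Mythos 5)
Your proof is correct and follows essentially the same route as the paper: read off the real eigenvalues from the upper-triangular block $A(U_j)$, then use $P'(\rho_j)\geq 0$ and $C_j\in[0,1]$ to conclude $\lambda_1\leq\lambda_2=v_j$. You additionally supply the justification for $C_j\leq 1$ via the selection rule \eqref{eq:probab} and the tie-breaking conventions, a step the paper's proof simply asserts.
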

\begin{proof}
	Let us consider the matrix $A(U_j)$ in \eqref{eq:model1:noncons}. The eigenvalues of the matrix $A(U_j)$ are $\lambda_1= v_j-\rho_j P'(\rho_j)\left( 1- C_j\right)$ and $\lambda_2=v_j$. In particular $\lambda_1,\lambda_2\in\mathbb{R}$, hence the system is hyperbolic. Furthermore  the first characteristic velocity is slower than the velocity of the vehicles, given that $P'(r)>0$ for all $r\in[0,\rho^{\text{max}}]$, provided $\gamma>0$, and $C_j\in[0,1]$, hence $v_j-\rho_jP'(\rho_j)(1-C_j)\leq v_j $.
\end{proof}

We further note that $\lambda_1$ of Model~1 depends on $C_j$, which is determined by the lane-changing probabilities and hence by the source terms. Despite this unusual feature, hyperbolicity and the anisotropy condition remain valid for all $C_j \in [0,1]$.

Regarding the well-posedness of the systems \eqref{eq:model1:noncons} and \eqref{eq:model2:noncons}, we place the models within the classical framework of hyperbolic systems of balance laws \cite{dafermos2005hyberbolic,bressan2000hyperbolic}. As discussed, the convective part strictly preserves hyperbolicity provided that $\rho_j > 0$. Furthermore, we require the source terms $S_j, R_j$, and $Q_j$ to satisfy the property of local Lipschitz continuity with respect to the state variables. It is worth noting that while lane-changing events in the microscopic model are discrete and lead to discontinuous state transitions, the macroscopic limit effectively smooths these dynamics. This regularity is inherited from the smooth functions $V_j$ and $G$, and from the structure of the transition probabilities $\pi^{h \to k}$, which rely on the Lipschitz-continuous operator $\max(0, \cdot)$ to model speed gain incentives. Together, these properties fulfill the standard requirements for ensuring the local existence and uniqueness of solutions for sufficiently smooth initial data.

\paragraph*{Conservative Form}
Recall that, using equation \eqref{finalemod2}, Model 2 can be written in a conservative form through the variables \( W_j = (\rho_j, y_j) \), where \( y_j = \rho_j (v_j + P(\rho_j)) \). The equations for lane \( j = 1, \dots, J \)  can thus be formulated as the following system of balance laws:
\[
\partial_t W_j + \partial_x F(W_j) = B(W_j),
\]  
where the flux function \( F(W_j) \) and the source term \( B(W_j) \) are given by  
\begin{equation*} 
F(W_j) =
\begin{pmatrix} 
y_j - \rho_j P(\rho_j) \\ 
\frac{y_j^2}{\rho_j} - y_j P(\rho_j) 
\end{pmatrix}, \qquad
B(W_j) =
\begin{pmatrix} 
\nu S_j \\ 
\alpha \rho_j \left(V_j(\rho_j) - \frac{y_j}{\rho_j} + P(\rho_j) \right) + Q_j 
\end{pmatrix}.
\end{equation*}

Furthermore,  while the microscopic derivation implicitly requires strictly positive densities, the resulting macroscopic PDEs can be continuously extended to include the vacuum state ($\rho_j = 0$). This is achieved by conventionally setting the momentum $y_j = 0$ and the velocity $v_j = 0$ whenever $\rho_j = 0$, thus avoiding singularities. Moreover, since the source terms vanish as the density approaches zero, the system naturally prevents the onset of negative densities.

\paragraph*{Relaxation limit}
	We observe that in the relaxation limit $\alpha \to +\infty$, Model 2~\eqref{eq:model2:noncons} converges to the first-order model studied in~\cite{piu2}, i.e.
	$$
		\partial_t \rho_j + \partial_x ( \rho_j V_j(\rho_j) ) = \nu S_j.
	$$
	In contrast, the relaxation of Model 1~\eqref{eq:model1:noncons} leads to a first-order model with a modified velocity function. Indeed, one obtains
	$$
		\partial_t \rho_j + \partial_x ( \rho_j ( V_j(\rho_j) + R_j ) ) = \nu S_j,
	$$
	which means that the lane changes influence the equilibrium speed of the flow. In both multilane models, the variable \( v_j \) relaxes toward the equilibrium velocity \( V_j(\rho_j) \) over a timescale proportional to \( \alpha^{-1} \). Formally, when \(\alpha \to +\infty\), the relaxation term becomes dominant and imposes the constraint \(
v_j = V_j(\rho_j) + o(1),
\) which reduces the second-order model to a first-order conservation law. This type of singular limit is well-known in relaxation theory and has been extensively used in macroscopic traffic flow models, e.~g., \cite{arz1, colombo2002hyperbolic}, in order to investigate stability of equilibrium   solutions. In our case, the convergence can be understood as a formal limit under the assumption that the initial data are compatible with the equilibrium manifold \( v_j = V_j(\rho_j) \) (well prepared initial data), and that the solutions remain smooth in the limit.

\paragraph*{Equilibria} We briefly discuss the nature of steady states of the models \eqref{eq:model1:noncons} and \eqref{eq:model2:noncons}. We define a \emph{global equilibrium} or steady state of the multi-lane model as a state in which no lane changes occur and all lanes are in their \emph{local equilibrium}, i.e., the density and velocity are spatially homogeneous, see \cite{piu2}. More specifically, we provide the following definition.

\begin{definition} \label{def:ss}
A global equilibrium $\{(\overline{\rho}_j(x),\overline{v}_j(x))\}_{j=1}^J$ of the macroscopic multi-lane models \eqref{eq:model1:noncons} and \eqref{eq:model2:noncons} is characterized by
$$
  \overline{\rho}_j(x) \equiv \overline{\rho}_j\in (0,\rho^{\max}] \ \text{  and  } \overline{v}_j(x) \equiv \overline{v}_j\in (0,v_j^{\max}]
$$ 
and for Model 1 \eqref{eq:model1:noncons}
 \[
\begin{aligned}
  S_j(\overline{\rho}_{j-1}, \overline{\rho}_j, \overline{\rho}_{j+1}, \overline{v}_{j-1}, \overline{v}_j, \overline{v}_{j+1}) &= 0, \\
  C_j(\overline{\rho}_{j-1}, \overline{\rho}_j, \overline{\rho}_{j+1}, \overline{v}_{j-1}, \overline{v}_j, \overline{v}_{j+1}) &= 0, \\
  R_j(\overline{\rho}_{j-1}, \overline{\rho}_j, \overline{\rho}_{j+1}, \overline{v}_{j-1}, \overline{v}_j, \overline{v}_{j+1}) &= 0,
\end{aligned}
\]
 for all $j=1,\dots,J$, whereas for Model 2 \eqref{eq:model2:noncons}  \[
\begin{aligned}
  S_j(\overline{\rho}_{j-1}, \overline{\rho}_j, \overline{\rho}_{j+1}, \overline{v}_{j-1}, \overline{v}_j, \overline{v}_{j+1}) &= 0, \\
  Q_j(\overline{\rho}_{j-1}, \overline{\rho}_j, \overline{\rho}_{j+1}, \overline{v}_{j-1}, \overline{v}_j, \overline{v}_{j+1}) &= 0,
\end{aligned}
\]

for all $j=1,\dots,J$, i.e., in each lane, the density and velocity are constant, and no lane change occurs for these values.
\end{definition}

It is worth noting that, from a boundary-value problem perspective, such a homogeneous state can be globally observed on a finite road segment only under compatible boundary conditions (e.g., periodic boundaries for a ring road, or prescribed inflow values that exactly match the equilibrium states $\overline{\rho}_j$ and $\overline{v}_j$).

Let us verify that there exists at least one stationary state of the system that satisfies the equilibrium concept just introduced. Consider a traffic regime in which lane changes do not occur because none of the lane changing conditions are satisfied. In this case, the following conditions provide an example of equilibrium: \begin{equation*}
\begin{cases} 
\partial_x(\rho_j v_j) = 0 & \text{(C1)} \\ 
v_j = V_j(\rho_j) & \text{(C2)} \\ 
\partial_x v_j = 0  & \text{(C3)}
\end{cases} \Longrightarrow
\begin{cases} 
\partial_t \rho_j = 0 \\ 
\partial_t	 v_j = 0 
\end{cases} 
\end{equation*} We observe that condition (C3) imposes  a spatially homogeneous equilibrium velocity. Consequently, from condition (C1), this consideration extends to the density, yielding an equilibrium as defined in Definition \ref{def:ss}. Moreover, from condition (C2), we obtain that the velocity \(\overline{v}_j = V_j(\overline{\rho}_j)\). Therefore, the equilibria obtained in this manner, i.e., equilibria where the equilibrium velocity is given by the desired velocity profile \(V_j(\cdot)\), coincide with those derived for the first-order model \eqref{macro_1_ord}, as detailed in \cite{piu2}. The above conditions are sufficient but not necessary; indeed, other conditions could be found that satisfy the characterization given in Definition \ref{def:ss}.

The global equilibrium configurations described in Definition 4 are generally not unique. For a fixed total traffic mass, multiple distributions of vehicles across the lanes can result in a state where no lane-changing maneuvers are triggered (i.e., all probabilities $\pi^{h \to k}$ vanish). Any such configuration, provided it satisfies the local equilibrium condition $v_j = V_j(\rho_j)$, represents a valid stationary state of the system. Consequently, the final equilibrium reached depends on the initial state and the transient dynamics of the flow.

It is also important to recall that we consider a nonzero relaxation parameter, i.e., \(\alpha \neq 0\); otherwise, the system would admit additional equilibrium states, where the equilibrium velocities would not necessarily be determined by the velocity profile \(V_j(\cdot)\).

\paragraph*{Comparison and interpretation of the models}
It is instructive to highlight the fundamental differences between the two proposed systems. Model 1 acts as the direct projection of the microscopic second-order dynamics to the macroscopic scale. However, its mathematical structure is not completely satisfactory, as the lane-changing events intricately affect the transport term. Conversely, Model 2 builds upon the classical structure of the ARZ model, relying on the assumption that the property $y$ is conserved across lanes. This structural difference becomes particularly evident in their relaxation limits: while lane-changing in Model 1 directly alters the equilibrium velocity (leading to a non-standard first-order limit), Model 2 naturally relaxes to the first-order multilane model \eqref{macro_1_ord}. Ultimately, both models exhibit very similar qualitative behavior in capturing traffic phenomena, our numerical experiments in Sec.~\ref{sez:simulazioni} demonstrate that the heuristic assumption underlying Model 2 ultimately yields a better quantitative fit with real-world empirical data.

\section{Numerical aspects and experiments}\label{sez:simulazioni}
In this section, we present the numerical scheme used to integrate the multilane macroscopic models, with particular focus on the non-conservative formulations given by equations \eqref{eq:model1:noncons} and \eqref{eq:model2:noncons}.

We then illustrate a series of test cases to demonstrate the application of the proposed second-order macroscopic models in realistic traffic scenarios. These tests include comparisons with the microscopic multilane model, the first-order multilane model, and the second-order single-lane model without lane changes.

Except for Test 5, the desired velocity profiles are assumed to have a linear dependence on density:
\begin{equation}
\label{lineari}
V_j(\rho_j) = v_j^{\mathrm{max}} \left(1 - \frac{\rho_j}{\rho^{\mathrm{max}}} \right),
\end{equation}
for \( j = 1, \dots, J \), with \( 0 \leq v_1^{\mathrm{max}} \leq \dots \leq v_J^{\mathrm{max}} \).  
With these definitions, the resulting theoretical fundamental diagrams \( f_j(\rho_j) = \rho_j V_j(\rho_j) \) take the parabolic (Greenshields) shape. Furthermore we consider $\rho^{\mathrm{max}}=1$. 

Specific details for each configuration are provided in the descriptions of the individual tests.

\subsection{Numerical scheme}\label{subsec:schema}

Let $\Delta x$ and $\Delta t^n$ be positive real numbers denoting the space and time increments, respectively. The mesh points $(x_{i},t^n)$ are characterized by $x_{i}=i\Delta x$ for every $i\in\mathbb{Z}$ and $t^{n+1}=t^n+\Delta t^n$ for every $n\in\mathbb{N}$. As is common in finite volume schemes \cite{leveque1992numerical}, we divide the $x$-axis into a sequence of cells $\{C_i\}_{i\in\mathbb{Z}}$ such that $C_i=[x_{i-1/2},x_{i+1/2})$, where $x_{i+1/2}=x_i+\Delta x/2$ are the cell interfaces. For every $n\in\mathbb{N}$ we construct a piecewise constant approximation $U^n_{j,i}$ of the variables $U_j = (\rho_j, v_j)^T$ for each lane $j=1, \dots, J$. These values represent the cell averages defined as:
\begin{equation*}
U^n_{j,i} = \frac{1}{\Delta x} \int_{C_i} U_j(x, t^n)\, \text{d}x.
\end{equation*}

The time step $\Delta t^n$ is adaptively chosen at each step $n$ to satisfy the CFL condition $\lambda_{\text{max}}^n \Delta t^n \le C \Delta x$ (where $C \leqslant 1$ is the CFL coefficient). Here, $\lambda_{\text{max}}^n$ denotes the largest wave speed \eqref{auto_tutto} in absolute value, which is time-dependent and therefore it is recomputed at each step $n$ based on the current values of the macroscopic variables $U^n_{j,i}$.

To address the fact that models 1 and 2 tend to become stiff when the parameter $\alpha$ is large, we adopt a implicit–explicit finite volume scheme (IMEX) \cite{leveque2002finite,pareschi2005implicit,crandall1980method}. In this approach, the nonlinear hyperbolic terms and the source terms related to the lane changes are handled explicitly, while the relaxation terms are treated implicitly. This strategy helps to effectively manage stiffness and to prevent the time-step restriction $\Delta t^n=O(1/\alpha)$. In particular, the scheme we will use is equivalent to a fractional step method, where the homogeneous part is initially approximated using a forward Euler step, followed by a backward Euler step to handle the relaxation part.

The update formula of a state $U^n_{j,i}$ in cell $C_i$ from time $t^n$ to the state $U^{n+1}_{j,i}$ at time $t^{n+1}$ is given by
\begin{equation*}
\begin{array}{ll}
\text{\underline{Model 1}} &  U^{n+1}_{j,i}=\tilde{U}^n_{j,i}+ \Delta t^n\begin{pmatrix} 
\nu S^n_{j,i} \\ 
\alpha\left(V_j(\rho_{j,i}^{n+1})-v_{j,i}^{n+1}+\frac{v_{j,i}^{n+1}}{v_{j,i}^n}R_{j,i}^n\right) \\ 
\end{pmatrix} \\
\text{\underline{Model 2}} &  U^{n+1}_{j,i}=\tilde{U}^n_{j,i}+ \Delta t^n\begin{pmatrix} 
\nu S^n_i \\ 
\alpha \left(V_j\left(\rho_{j,i}^{n+1}\right) - v_{j,i}^{n+1}\right) + \frac{Q^n_{j,i}}{\rho_{j,i}^n}-\nu S_{i,j}^n \left(P'(\rho_{j,i}^n)+\frac{v_{j,i}^n+P(\rho_{j,i}^n)}{\rho_{j,i}^n}\right) \\ 
\end{pmatrix}   \\
\end{array}
\end{equation*}where $\tilde{U}^n_{j,i}=(\tilde{\rho}^n_{j,i},\tilde{v}_{j,i}^n)^T$ denotes the  solution of the homogeneous hyperbolic part.
 
Note that in the second equation of Model 1, we have applied a semi-linearization technique \cite{pareschi2005implicit, patankar2018numerical} to the term \( R_j \). Specifically, we have rewritten \( R_j \) as \( R_j = \frac{v_j}{v_j} R_j \) and then approximated the velocity in the numerator using its value at time step \( n+1 \), while the velocity in the denominator and the term $R_j$ are approximated using their value at time step \( n \). This approach allows us to overcome the nonlinearity of the term $R_j$ appearing in the relaxation part. Note that with this choice, we restrict our simulations to the case of non-zero velocities. 

In order to compute $\tilde{U}^n_{j,i}$ we adopt a Roe-type wave-propagation scheme, that is an Upwind scheme by linearizing the Jacobian matrix at each cell interface developed by Roe \cite{roe1,roe2}. The scheme can be expressed by:
%\begin{equation}
%\tilde{U}^{n}_{j,i}=U^n_{j,i}-\frac{\Delta t^n}{\Delta x} \left( A^+_{i-\frac{1}{2}}( U_j^n-U_{j-1}^n)+ A^-_{i+\frac{1}{2}}( U_{j+1}^n-U_{j}^n)\right).
%\end{equation}
\begin{equation*}
\tilde{U}^{n}_{j,i}=U^n_{j,i}-\frac{\Delta t^n}{\Delta x} \left( (A^+(\Delta U_j))^n_{i-\frac{1}{2}}+ (A^-(\Delta U_j))^n_{i+\frac{1}{2}}\right).
\end{equation*}
The matrix $A(U_j)$ is linearized at cell interface using the mean $\bar{\rho}_j=\frac{1}{2}(\rho_{j,i}+\rho_{j,i+1})$ and $\bar{v}_j=\frac{1}{2}(v_{j,i}+v_{j,i+1})$. The solution gradient is projected on the characteristic fields of the linearized system \begin{equation}\label{univocamente} \Delta U_j = \sum_k \zeta_k \bar{r}_k,
\end{equation} for $k=1,\dots,2J$ yielding
\begin{equation*}
A^\pm (\Delta U_j) = \sum_{\lambda_k \gtrless 0} \zeta_k \bar{\lambda}_k \bar{r}_k,
\end{equation*}
where $\bar{r}_k$ and $\bar{\lambda}_k$ are the eigenvectors and eigenvalues of the local linearized matrix $A(U_j)$ at $x_{i+\frac{1}{2}}$, see \eqref{auto_tutto}, with the wave strengths \(\zeta_k\) uniquely determined by equation~\eqref{univocamente}. The explicit expression of the weights~$\zeta_k$ can be found, e.g., in~\cite{SongKarni}. We remind that both Model 1 and Model 2 share the same hyperbolic part of the model in~\cite{SongKarni}.
%%%%%%%%%%%%%%%%%%%%%%%%%%%%%%%%%%%%%%%%%%%%%%%%%%%%%%%%%%%%%%
% L'ESPRESSIONE ESPLICITA DEI PESI GABRIELLA AVEVA DETTIO DI TOGLIERLA
% given by 
%\begin{equation*}
%\begin{array}{lll}
%\text{Model 1}     & \text{Model 2} & \\
%\zeta_k=-\dfrac{\Delta v_{\frac{k+1}{2}}}{P'\left(\bar{\rho}_{\frac{k+1}{2}}\right)(1-C)} & \zeta_k=-\dfrac{\Delta v_{\frac{k+1}{2}}}{P'\left(\bar{\rho}_{\frac{k+1}{2}}\right)} & k=1,3,\dots,2J-1\\
%\zeta_k=\Delta \rho_{\frac{k}{2}}-\zeta_{k-1} & \zeta_k=\Delta \rho_{\frac{k}{2}}-\zeta_{k-1} & k=2,4,\dots,2J     \\
%\end{array}
%\end{equation*} where $\Delta()=()_{,i+1}-()_{,i}$.
%%%%%%%%%%%%%%%%%%%%%%%%%%%%%%%%%%%%%%%%%%%%%%%%%%%%%%%%%%%%%%

Since we use an implicit method only for the relaxation part in the velocity equation, the time update of the density variable is fully explicit. Therefore, in the time update of the velocity the quantity $\rho^{n+1}_i$ is already known. In general, using an implicit scheme, the new state $v^{n+1}_i$ is obtained as the solution of a scalar nonlinear equation, but in the particular case of our models, this update is given by an explicit formula. Therefore, we have
\begin{equation*}
\begin{array}{ll}
\text{\underline{Model 1}} &  v^{n+1}_{j,i}=\dfrac{\tilde{v}_{j,i}^n+\Delta t^n \alpha V_j(\rho_{j,i}^{n+1})}{1+\Delta t^n \alpha \left( 1-\frac{R_i^n}{v^n_{j,i}} \right)}  \\
\text{\underline{Model 2}} &  v^{n+1}_{j,i}=\dfrac{\tilde{v}^n_{j,i}+\Delta t^n \left( \alpha V_j(\rho_{j,i}^{n+1})+  \frac{Q}{\rho_{j,i}^n}-\nu S_i^n \left(P'(\rho_{j,i}^n)+\frac{v_{j,i}^n+P(\rho_{j,i}^n)}{\rho_{j,i}^n}\right) \right)}{1+\Delta t^n \alpha }   \\
\end{array}
\end{equation*}

Boundary conditions are imposed by introducing ghost cells adjacent to the outermost grid cells on both sides of the computational domain. For the borders, either free-flow conditions or periodic conditions are considered, as detailed in the numerical test section. This completes the description of the numerical scheme used for the following simulations.

\subsection{Test 1: asymptotic consistency between the microscopic and the macroscopic multilane models}

\begin{table}[t]
\centering
\resizebox{\columnwidth}{!}{
	\renewcommand{\arraystretch}{1.5}
\begin{tabular}{l|cc|cc}
& \multicolumn{2}{c|}{Lane 1} & \multicolumn{2}{c}{Lane 2} \\
& $t=0$ & $t=T=1000$ & $t=0$ & $t=T=1000$ \\
\hline \hline
\bf TEST 1a & & & &\\
\hline
\multirow{2}{*}{Micro Model \eqref{bftl_nlane}} & $N_1(0)=150$ & $N_1(T)=99$ & $N_2(0)=30$ & $N_2(T)=81$ \\
& $\rho_1^{(n)}(0)=1$ & $\rho_1^{(n)}(T)=0.66$ & $\rho_2^{(n)}(0)=0.20$ & $\rho_2^{(n)}(T)=0.54$ \\
& $v_1^{(n)}(0)=0$ & $v_1^{(n)}(T)=0.24$ & $v_2^{(n)}(0)=0.80$ & $v_2^{(n)}(T)=0.47$ \\
\hline%\hline
Macro Model 1 \eqref{eq:model1:noncons} & $\rho_{1}(x,0)=1$ & $\rho_{1}(x,T)=0.70$ & $\rho_{2}(x,0)=0.20$ & $\rho_{2}(x,T)=0.50$ \\
& $v_{1}(x,0)=0$ & $v_{1}(x,T)=0.21$ & $v_{2}(x,0)=0.80$ & $v_{2}(x,T)=0.50$\\
\hline%\hline
Macro Model 2 \eqref{eq:model2:noncons} & $\rho_{1}(x,0)=1$ & $\rho_{1}(x,T)=0.70$ & $\rho_{2}(x,0)=0.20$ & $\rho_{2}(x,T)=0.50$ \\
& $v_{1}(x,0)=0$ & $v_{1}(x,T)=0.21$ & $v_{2}(x,0)=0.80$ & $v_{2}(x,T)=0.50$ \\
\hline\hline

\bf TEST 1b & & & &\\
\hline
\multirow{2}{*}{Micro Model \eqref{bftl_nlane}} & $N_1(0)=1000$ & $N_1(T)=67$ & $N_2(0)=50$ & $N_2(T)=83$ \\
& $\rho_1^{(n)}(0)=0.67$ & $\rho_1^{(n)}(T)=0.45$ & $\rho_2^{(n)}(0)=0.33$ & $\rho_2^{(n)}(T)=0.55$ \\
& $v_1^{(n)}(0)=0.23$ & $v_1^{(n)}(T)=0.39$ & $v_2^{(n)}(0)=0.67$ & $v_2^{(n)}(T)=0.45$ \\
\hline%\hline
Macro Model 1 \eqref{eq:model1:noncons} & $\rho_{1}(x,0)=0.67$ & $\rho_{1}(x,T)=0.50$ & $\rho_{2}(x,0)=0.33$ & $\rho_{2}(x,T)=0.50$ \\
& $v_{1}(x,0)=0.23$ & $v_{1}(x,T)=0.35$ & $v_{2}(x,0)=0.67$ & $v_{2}(x,T)=0.50$\\
\hline%\hline
Macro Model 2 \eqref{eq:model2:noncons} &  $\rho_{1}(x,0)=0.67$ & $\rho_{1}(x,T)=0.50$ & $\rho_{2}(x,0)=0.33$ & $\rho_{2}(x,T)=0.50$ \\
& $v_{1}(x,0)=0.23$ & $v_{1}(x,T)=0.35$ & $v_{2}(x,0)=0.67$ & $v_{2}(x,T)=0.50$ \\
\hline
\end{tabular}}\caption{Test 1. Numerical experiments assessing the consistency of the macroscopic limit for the microscopic and macroscopic multilane models. The table shows the initial conditions  and the final states  for both the microscopic model and two macroscopic models at the final time $T=1000$. The results indicate a strong agreement between the microscopic and macroscopic models, demonstrating that the macroscopic models accurately replicate the behavior of the microscopic model.\label{tab:consi}}
\end{table}

The following tests aim to analyze the consistency between the solutions of the BFtL multilane model \eqref{bftl_nlane}, with $K=0$, and the macroscopic multilane models \eqref{eq:model1:noncons} and \eqref{eq:model2:noncons}, with the pressure~\eqref{defP}. The microscopic model is solved using a fourth-order Runge-Kutta method.

We consider a circular two-lane road, which means that we use periodic boundary conditions, with a total length of $\tilde{L} = 1500$ meters, where each vehicle has a fixed length of $\tilde{l} = 5$ meters. In the numerical simulations, we employ normalized parameters, specifically $L = 1$, $l = d_s = 0.00\overline{3}$, and a final time of $T = 1000$.

For the initial conditions, we set the number of vehicles in each lane such that they are uniformly distributed. This configuration ensures local equilibrium within each lane while still allowing for lane changes.

For the microscopic model, after selecting the values of $N_j(0)$ for $j=1,2$, we impose the following initial conditions:
\begin{equation*}
\begin{array}{l}
x_{s_n^j}(0)-x_n(0)=\frac{L}{N_j(0)}, \\ v_n(0)=\tilde{V}_j\left(\frac{L}{N_j(0)}\right),
\end{array} \quad \forall n\in I_j(0).
\end{equation*}

For the macroscopic models, the initial conditions are derived from the corresponding initial local discrete densities defined in~\eqref{piece}, leading to:
\begin{equation*}
\begin{array}{l}
\rho_j(x,0)=\sum_{n\in I_j(0)}\rho_j^{(n)}(t=0) \chi_{[x_n(0),x_{s_n^j}(0))}(x), \\ v_j(x,0)=V_j(\rho_j(x,0)),
\end{array}
\end{equation*}
where $\chi_I(x)$ is the characteristic function of the set $I \subset \mathbb{R}$. Furthermore, $\Delta x = 0.001$.

In addition, we choose $v_2^{\max}=1$ and $v_1^{\max}=0.7$, and the other model parameters are set as $\alpha=\beta=1$, $\gamma=2$, and $\eta=0$. Note that $V(r)=\tilde{V}\left(\frac{l+d_s}{r}\right)$. 

The initial conditions for Test 1a and Test 1b are specifically chosen to analyze different levels of congestion in the slow lane (lane 1) while lane 2 is kept relatively free. In Test 1a, lane 1 is totally congested, while in Test 1b, it is only partially congested. These setups are designed to activate lane-changing maneuvers toward the faster lane, allowing for a robust verification of the micro-to-macro consistency of the source terms.

The results of the tests are summarized in Table~\ref{tab:consi}, where we report the selected initial conditions, including the number of vehicles and the local densities and velocities for the microscopic model, as well as the macroscopic density and velocity for the continuum models. We also report the final states obtained through the numerical evolution of the models.

Both simulations show good agreement between the microscopic multilane model and its macroscopic counterparts at the final time.

\subsection{Test 2: clearing a traffic queue and comparison with the first-order multilane model}

\begin{figure}[t]
\centering
\includegraphics[width=1.1\textwidth]{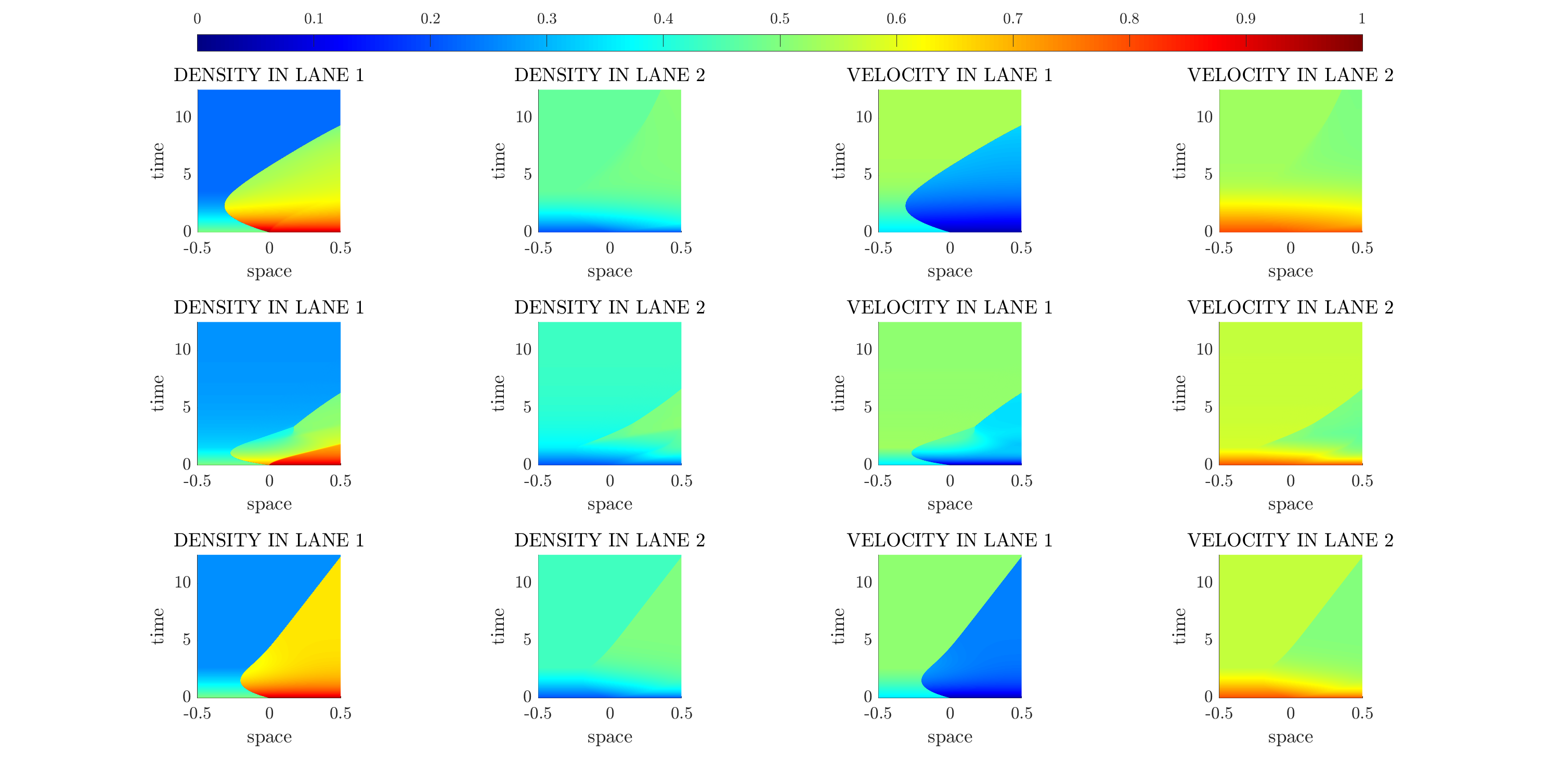}
\caption{Test 2: clearing a traffic queue and comparison with the first-order multilane model. The figure shows four panels in each row. In the first two panels, the time evolution of the density in lanes 1 and 2 is displayed, while the last two panels show the time evolution of the velocity in lanes 1 and 2.  The first row presents the results of the second-order model~\eqref{eq:model1:noncons}, the second row shows the corresponding results of the second-order model~\eqref{eq:model2:noncons}, and the third row contains the results of the first-order model~\eqref{macro_1_ord}.}
\label{fig:coda}
\end{figure}

In this test, we study the dynamics of the second-order models~\eqref{eq:model1:noncons} and~\eqref{eq:model2:noncons}, with the pressure~\eqref{defP}, in a case of a traffic jam. We also compare the results with the predictions provided by the first-order macroscopic model~\eqref{macro_1_ord}, which is solved using a finite volume scheme with the Rusanov numerical flux.

We consider a two-lane road represented by the interval $[-0.5,0.5]$, with free flow conditions at the border and with $v_2^{\max}=1$ and $v_1^{\max}=0.7$. The initial conditions are chosen such that, in lane 1, there is a queue of vehicles moving slowly on the right side of the domain, while in lane 2, the vehicle density is spatially low, allowing vehicles to flow smoothly. To ensure a consistent comparison between the second-order models and the first-order model, we initialize the velocities using the equilibrium speed values.

Specifically, the initial data are defined as follows:
\begin{equation*}
\begin{array}{ll}
\rho_1(x,0) = \begin{cases} 
0.50 & x \in [-0.5, 0), \\ 
0.95 & x \in [0, 0.5], 
\end{cases} 
& v_1(x,0) = V_1(\rho_1(x,0)) = \begin{cases} 
0.350 & x \in [-0.5, 0), \\ 
0.035 & x \in [0, 0.5], 
\end{cases} \\[10pt]
\rho_2(x,0) = 0.2 \quad \forall x \in [-0.5, 0.5], 
& v_2(x,0) = V_2(\rho_2(x,0)) = 0.80 \quad \forall x \in [-0.5, 0.5].
\end{array}
\end{equation*}

The model parameters are: $\alpha = \nu = 1$, $l = d_s = 0.5$, $\beta = \gamma = 2$, $\eta = 0.1$. Additionally, $\Delta x = 0.001$ and $T = 12.5$.

The results are illustrated in Fig.~\ref{fig:coda}. Lane-changing is activated promptly in all models, facilitating the migration of certain vehicles from lane 1 to lane 2. This transition effectively alleviates the traffic congestion present in lane 1 while concurrently increasing the vehicle density in lane 2, which leads to a corresponding decrease in average speed. The qualitative trends exhibited by the three models are comparable. However, Model 2 predicts a more rapid dissipation of the queue compared to Model 1. Furthermore, it is noteworthy that in the first-order model, the queue dissipates more gradually due to the direct correlation between velocity and density, which results in a less dynamic response to changing traffic conditions. Indeed, in the first-order model, non-equilibrium phenomena cannot be accurately represented.

\begin{figure}[htbp]
\centering
\begin{subfigure}[htbp]{\textwidth}
\centering
\includegraphics[width=\textwidth]{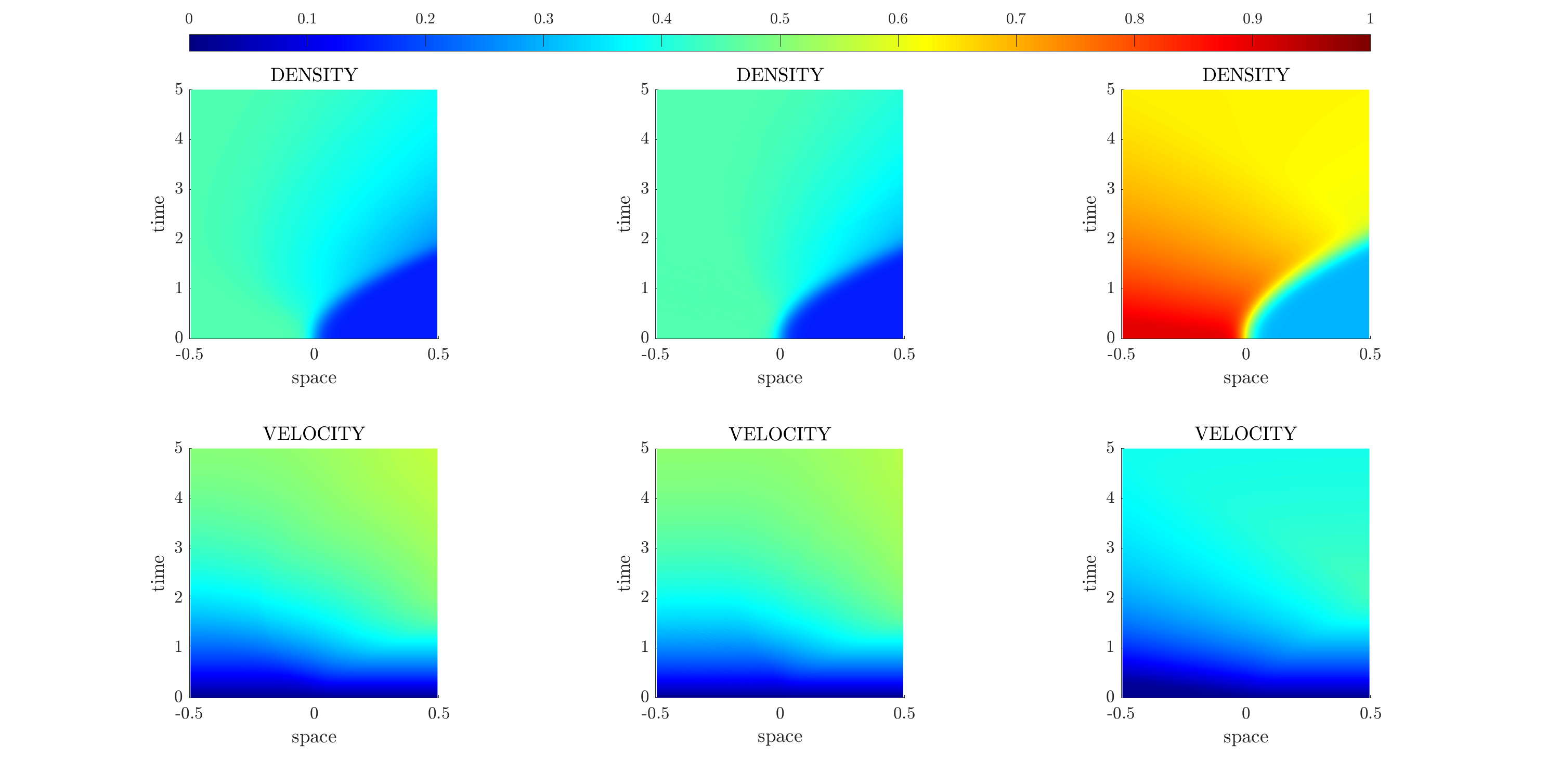}
\caption{The first column displays the evolution of the average density $\frac{1}{2}(\rho_1 + \rho_2)$ and average velocity $\frac{1}{2}(v_1 + v_2)$ for Model 1~\eqref{eq:model1:noncons}. The second column presents the corresponding average quantities for Model 2~\eqref{eq:model2:noncons}. The third column shows the evolution of density and velocity for the single-lane ARZ model~\eqref{eq:arz}. This aggregate representation facilitates a direct comparison of how lane-changing dynamics in multilane models regularize traffic flow compared to the single-lane scenario.}
\label{fig:testmedie1}
\end{subfigure}
\begin{subfigure}[htbp]{\textwidth}
\centering
\includegraphics[width=\textwidth]{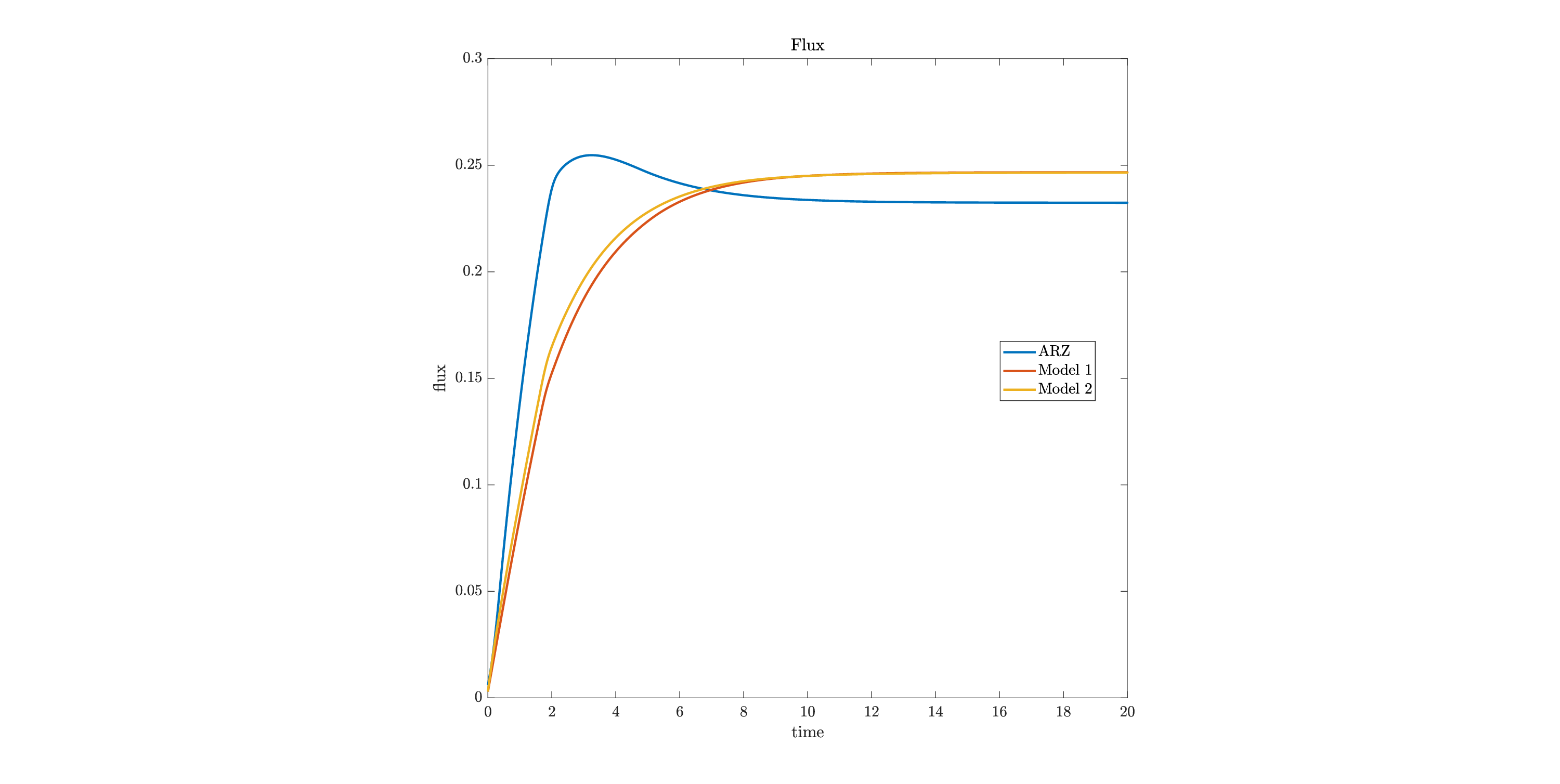}
{\caption{Time evolution of the average flow \( F(t) \) for the single-lane model and the mean flow \( \frac{1}{2}(F_1(t)+F_2(t)) \) for the multi-lane models.}}
\label{fig:testmedie2}
\end{subfigure}
\caption{Test 3. Traffic regularization and comparison with the single-lane case}
\label{fig:testmedie}
\end{figure}
\subsection{Test 3: traffic regularization and comparison with the single-lane case}
In this test, we consider a road with \( J = 2 \) lanes, where the lane 1 exhibits high traffic density in the left part of the spatial domain and lower traffic density in the right part. The dynamics of the second-order models~\eqref{eq:model1:noncons} and~\eqref{eq:model2:noncons}, with pressure defined as in~\eqref{defP}, are compared with the single-lane ARZ model~\eqref{eq:arz}, where the initial mass is assumed to be equal to the sum of the masses in the two-lane version.

Specifically, our computational domain is given by \([-0.5,0.5]\) with a final time \( T=20 \). We employ a regularization of the step-like initial data using the sigmoid function \( s(x)=(1+\exp{(-50x)})^{-1} \) to obtain solutions that satisfy the entropy condition under the Roe scheme used in our simulations.

The initial conditions for both two-lane models are defined for \( j=1,2 \) as follows: \begin{equation*}
\rho_j(x,0)=\rho_j^L(1-s(x))+\rho_j^Rs(x), \quad v_j(x,0)=0.01,
\end{equation*} with \(\rho_1^L=0.7, \rho_1^R=0.1\) and \(\rho_2^L=0.2, \rho_2^R=0.2\).

For the single-lane model, we consider: \begin{equation*}
\rho(x,0)=(\rho_1^L+\rho_2^L)(1-s(x))+(\rho_1^R+\rho_2^R)s(x), \quad v(x,0)=0.01.
\end{equation*}

In all cases, we impose free-flow boundary conditions. The parameters of the models are set as follows: \(\alpha = 0.5\), \(\beta = \nu = 1\), \(l = d_s = 0.5\), \(\gamma = 2\), and \(\eta = 0.1\). Additionally, the spatial step size is \(\Delta x = 0.001\).

The results regarding the evolution of density and velocity are presented in Fig.~\ref{fig:testmedie1}, where, for visualization purposes, we display the solution at the intermediate time \( \tilde{T} = 5 \). To provide a more comprehensive comparison with the single-lane scenario, we have included  the evolution of the average density $ \frac{1}{2}(\rho_1(x,t) + \rho_2(x,t))$ and velocity  $ \frac{1}{2}(v_1(x,t) + v_2(x,t))$ for both multilane models. This allows one to appreciate how the mass redistribution facilitated by lane-changing maneuvers affects the overall traffic flow regularization compared to the standard ARZ model.

We observe that in the multilane scenario, the activation of lane changes, leads to an increase in the average velocity in both lanes compared to the single-lane case. In the latter, where the same total mass is considered, the average velocities remain lower.

This qualitative behavior is further confirmed by the explicit computation of the average flow over time, given by \begin{equation}\label{calcoloQ}
F_j(t) = \int_{-0.5}^{0.5} \rho_j(x,t) v_j(x,t) \,\text{d}x, \quad j=1,2.
\end{equation} In Fig.~\ref{fig:testmedie}, we compare the flow \( F(t) \) for the single-lane ARZ model~\eqref{eq:arz} with the average of the flows in the two-lane models, given by \( \frac{1}{2}(F_1(t)+F_2(t)) \) for Models 1 and 2. The integral is numerically approximated using the trapezoidal rule. In both multilane scenarios, the obtained average flow is higher than that of the single-lane case.

%%%%%%%%%%%%%%%%%%%%%%%%%%%%%%%%%%%%%%%%%%%%%%%%%%%%
%\begin{figure}[ht]
%\centering
%% Prima sottografia (a)
%\begin{subfigure}[ht]{1\textwidth}
%\centering
%\includegraphics[width=1\textwidth]{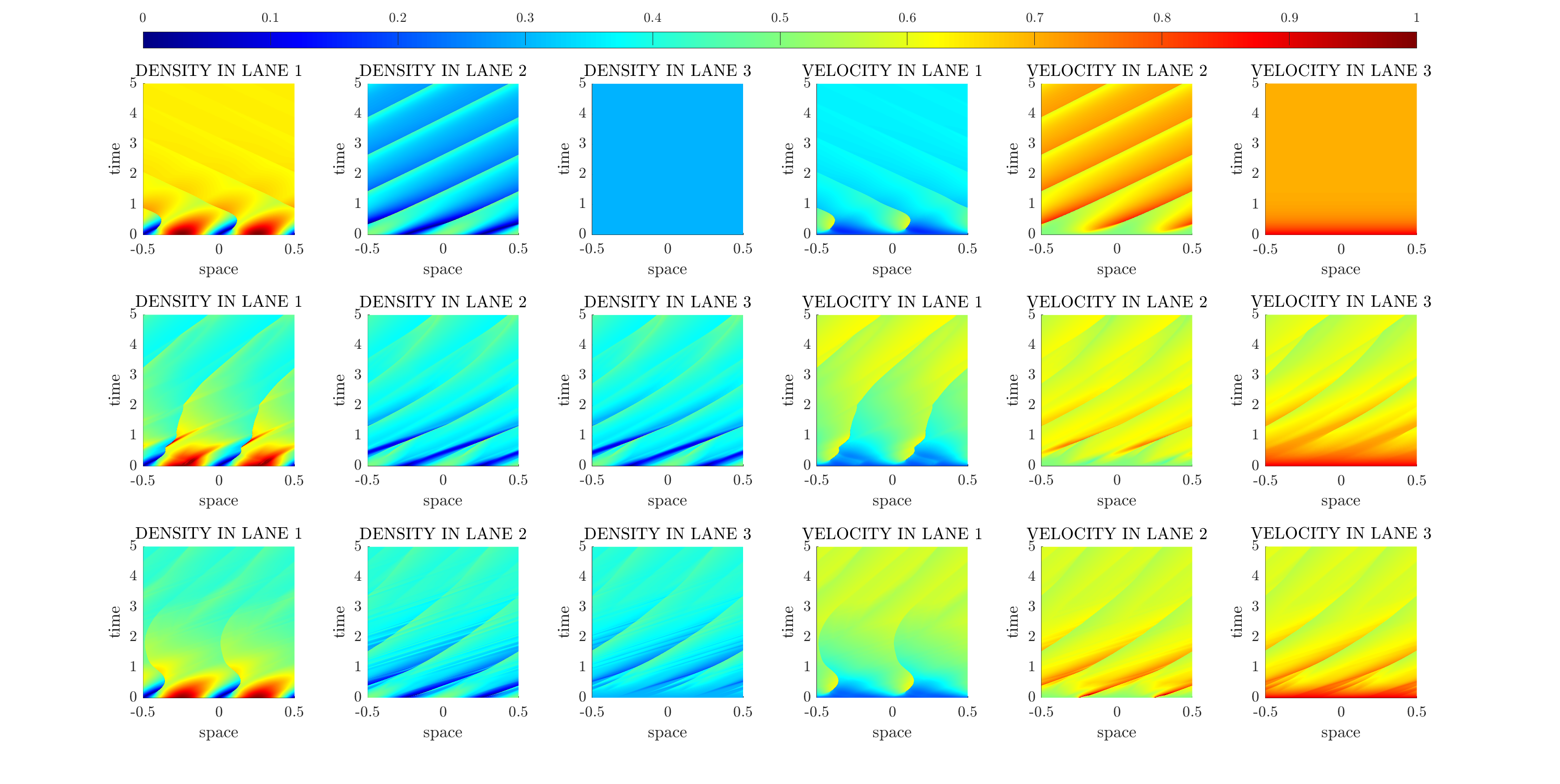}
%\caption{The first row shows the evolution of the density and velocity in each lane using the Aw-Rascle and Zhang model~\eqref{eq:arz}, without lane changes. The second row presents the corresponding density and velocity profiles obtained with Model 1~\eqref{eq:model1:noncons}. The third row displays the density and velocity for each lane using Model 2~\eqref{eq:model2:noncons}.}
%\label{fig:3lanes}
%\end{subfigure}
%\vspace{-0.5cm}
%% Seconda sottografia (b)
%\begin{subfigure}[ht]{0.9\textwidth}
%\centering
%\includegraphics[width=0.9\textwidth]{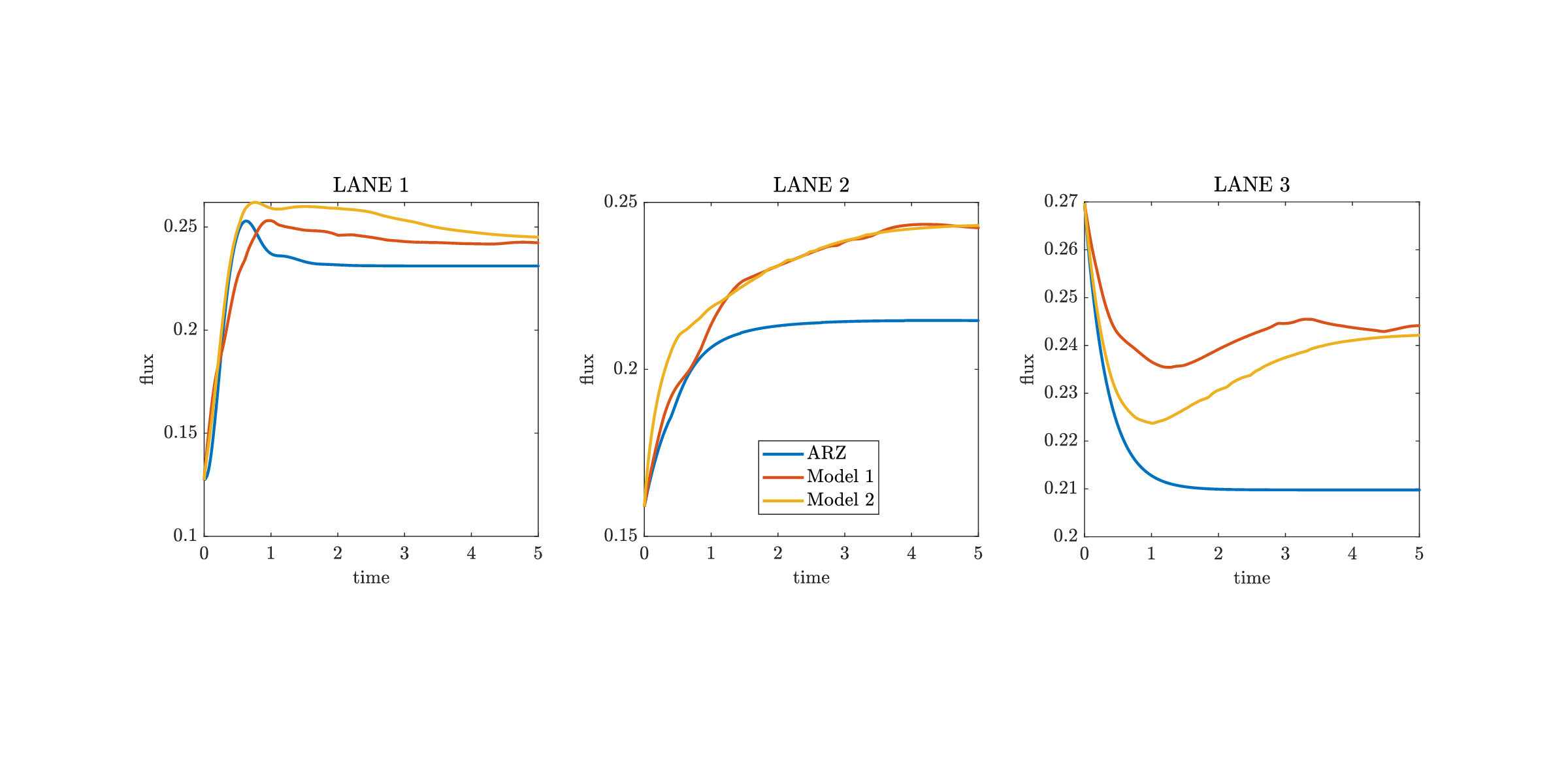}
%\caption{Time evolution of the average flow \( F_j(t) \) for each lane (\( j = 1,2,3 \)). Each subplot represents a different lane, showing the comparison between the three models.}
%\label{fig:3lanesflussi}
%\end{subfigure}
%\vspace{-0.5cm}
%% Terza sottografia (c)
%\begin{subfigure}[ht]{0.9\textwidth}
%\centering
%\includegraphics[width=0.9\textwidth]{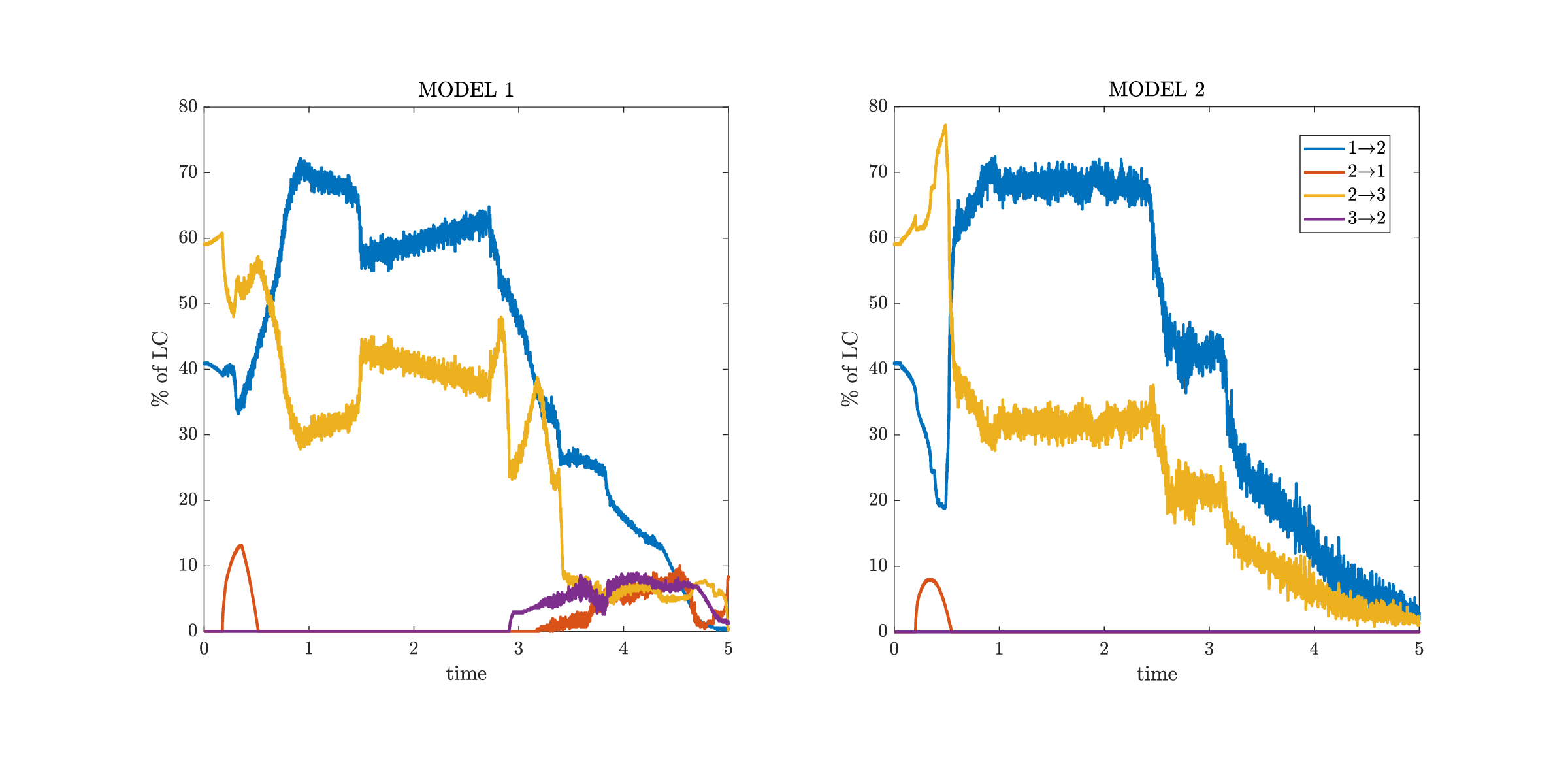}
%\caption{Lane change percentages over time for both models. The left panel shows the results for Model 1, while the right panel displays the results for Model 2. The graphs illustrate the evolution of lane changes in terms of the percentage of computational cells affected over time.}
%\label{fig:3lanescambi}
%\end{subfigure}
%\caption{Test 4. A three-lane case and comparison of scenarios with and without lane changes.}
%\label{fig:test3}
%\end{figure}
\begin{figure}[htbp]
\centering
% Prima sottografia (a)
\begin{subfigure}{\textwidth}
\centering
\includegraphics[width=1\textwidth]{3lanes.eps}
\caption{The first row shows the evolution of the density and velocity in each lane using the Aw-Rascle and Zhang model~\eqref{eq:arz}, without lane changes. The second row presents the corresponding density and velocity profiles obtained with Model 1~\eqref{eq:model1:noncons}. The third row displays the density and velocity for each lane using Model 2~\eqref{eq:model2:noncons}.}
\label{fig:3lanes}
\end{subfigure}

\vspace{0.5cm}

% Seconda sottografia (b)
\begin{subfigure}{\textwidth}
\centering
\includegraphics[width=0.9\textwidth]{3lanes_flussi.eps}
\caption{Time evolution of the average flow \( F_j(t) \) for each lane (\( j = 1,2,3 \)). Each subplot represents a different lane, showing the comparison between the three models.}
\label{fig:3lanesflussi}
\end{subfigure}
\caption{Test 4. A three-lane case and comparison of scenarios with and without lane changes. \emph{(Continued on next page)}}
\end{figure} 

\begin{figure}[htbp]
\ContinuedFloat 
\centering
% Terza sottografia (c)
\begin{subfigure}{\textwidth}
\centering
\includegraphics[width=0.9\textwidth]{3lanes_cambi.eps}
\caption{Lane change percentages over time for both models. The left panel shows the results for Model 1, while the right panel displays the results for Model 2. The graphs illustrate the evolution of lane changes in terms of the percentage of computational cells affected over time.}
\label{fig:3lanescambi}
\end{subfigure}
\caption[]{Test 4. A three-lane case and comparison of scenarios with and without lane changes. \emph{(Continued)}}
\label{fig:test3}
\end{figure}

%\begin{figure}[t]
%\centering
%\includegraphics[width=\textwidth]{3lanes.eps}
%\caption{Test 3. Simulation results for a three-lane road. The first row shows the evolution of the density and velocity in each lane using the Aw-Rascle and Zhang model~\eqref{eq:arz}. The second row presents the corresponding density and velocity profiles obtained with Model 1~\eqref{eq:model1:cons}. The third row displays the density and velocity for each lane using Model 2~\eqref{eq:model2:cons}.}
%\label{fig:3lanes}
%\end{figure}
\subsection{Test 4: a three-lane case and comparison of scenarios with and without lane changes}
In this test, we analyze a three-lane road modeled by the spatial domain \([-0.5, 0.5]\), where the initial configuration of density and velocity in lane 1 gives rise to traveling perturbations of the flow. In this example, we employ identical profiles of desired speed, given by \eqref{lineari} with $v_j^\text{max}=1$, for all $j=1,2,3$. The objective of our investigation is to evaluate traffic dynamics in the absence and presence of lane-changing mechanisms. Specifically, we juxtapose the results derived from the Aw-Rascle and Zhang model \eqref{eq:arz}, which lacks lane-changing capabilities, with the results from two multilane models that account for lane-changing effects, namely~\eqref{eq:model1:noncons} and~\eqref{eq:model2:noncons}.

We use the following initial conditions for the traffic density and velocity in each lane:
\begin{equation*}
\begin{array}{ll}
\rho_1(x,0) = |\sin(2\pi x)|, & v_1(x,0) = 0.2, \\
\rho_2(x,0) = 0.5 |\cos(2\pi x)|, & v_2(x,0) = 0.5, \\
\rho_3(x,0) = 0.3, & v_3(x,0) = 0.9, \\
\end{array}
\end{equation*}
The parameters of the models are set as follows: \(\alpha = 3\), \(\beta = \nu = 1\), \(l = d_s = 0.5\), \(\gamma = 2\), and \(\eta = 0.1\). Additionally, the spatial step size is \(\Delta x = 0.001\) and the final simulation time is \(T = 5\).

The results are presented in Fig.~\ref{fig:3lanes}. It can be observed that the backward-propagating perturbations, which appear in the scenario without lane changes, are significantly dampened and reduced when lane-changing is enabled. This redistribution of vehicles between lanes leads to a more stable system, smoothing out traffic fluctuations. By computing the average flow over time for all three models, given by \eqref{calcoloQ}, we observe that enabling lane changes leads to a higher vehicle flow across all three lanes compared to the single-lane scenario, see Fig.~\ref{fig:3lanesflussi}. Additionally, in Fig.~\ref{fig:3lanescambi}, we report the number of lane changes over time, calculated as the percentage of computational cells in which lane changes occur. To do this, at each time step, we count how many cells experience lane changes and divide that by the total number of cells, thus obtaining a percentage. This allows us to track the dynamics of lane changes over time and better understand their distribution in the system. We notice that in both cases, there is a higher number of lane changes toward lane 3. Additionally, in Model 1, we observe mass lane changes originating from lane 3 to lane 2, which are completely absent in Model 2.

%Additionally, while both multilane models exhibit comparable qualitative solution behaviors, Model~2 offers more seamless transitions than Model~1.

\subsection{Test 5: validation with experimental traffic data}

To assess the capability of the model to reproduce realistic traffic behavior, we present  numerical experiments based on empirical data from multilane highways. Specifically, we compare the fundamental diagrams derived from real measurements with those obtained from simulations of the second-order models \eqref{eq:model1:noncons} and \eqref{eq:model2:noncons}, where the parameters of the theoretical fundamental diagrams have been calibrated using the empirical data.

This comparison provides a qualitative validation step for the models: by verifying that the simulated flow-density and speed-density relationships resemble their empirical counterparts, particularly in terms of  lane-specific differences and capacity drop, we demonstrate the model's ability to capture key features of real-world traffic dynamics, even beyond the range directly observed in the data. This approach also highlights the advantage of second-order models in reproducing complex traffic phenomena that first-order models typically fail to describe, because they cannot provide multivalued fundamental diagrams.

\paragraph{Empirical datasets and calibration} We considered two distinct datasets that offer lane-resolved measurements of flow, density, and speed, enabling the construction of empirical fundamental diagrams. The first dataset consists of traffic flow measurements collected via fixed sensors along a two-lane segment of the Italian highway A4 Trieste–Venice, provided by Autovie Venete S.p.A.~\cite{avnorme,maya}. The second dataset is based on vehicle trajectories extracted from video recordings on a three-lane section of the German motorway A3 near Frankfurt am Main~\cite{kallo2019microscopic,2dimdata}. Regarding the Italian dataset, macroscopic traffic quantities were obtained from fixed inductive-loop detectors located along the A4 motorway. These sensors provide direct measurements of flow and average velocity, which are then aggregated every 60 seconds. The macroscopic density is then derived using the relation $\rho = f/v$ \cite{maya}. For light and heavy vehicles, data from both the slow and fast lanes are considered; however, it should be noted that for heavy vehicles, only the slow lane is actually involved, reflecting the specific overtaking restrictions in place on that highway segment.
For the German dataset, which provides individual vehicle trajectories, macroscopic quantities (density, flow, and velocity) were extracted following the procedure in \cite{2dimdata}. Microscopic velocities were first estimated from 2D positions via a least-squares linear approximation. Macroscopic values were then computed every second and subsequently aggregated over a 60-second time horizon. This averaging process filters out high-frequency fluctuations and allows for a consistent comparison with the Italian sensor data.

\paragraph{Fitting of the theoretical fundamental diagrams} For each dataset, we constructed empirical fundamental diagrams for each lane by plotting flow vs.~density and velocity vs.~density, using traffic measurements averaged over short spatial and temporal intervals. As a reference for fitting, we adopted the three-parameter family of smooth and strictly concave flow rate curves introduced in \cite{3param1,gsom1}, given by
%\begin{equation}
%\label{FD:3param}
%f_{\alpha,\lambda,p}(\rho) = 
%\alpha\left( a+(b-a)\frac{\rho}{\rho^{\max}}-\sqrt{1+y^2} \right)
%\end{equation} %where 
%\begin{equation*}
%a=\sqrt{1+(\lambda p)^2}, \quad b=\sqrt{1+(\lambda(1-p))^2}, \quad y=\lambda\left( \frac{\rho}{\rho^{\max}}-p \right).
%\end{equation*}
\begin{equation}
\label{FD:3param}
\begin{split}
f_{\alpha,\lambda,p}(\rho) = \alpha\Bigg(
&\sqrt{1+(\lambda p)^2}
+ \left(\sqrt{1+(\lambda(1-p))^2} - \sqrt{1+(\lambda p)^2}\right)\frac{\rho}{\rho^{\max}}\\
&- \sqrt{1+\left( \lambda\left( \frac{\rho}{\rho^{\max}} - p \right) \right)^2}
\Bigg).
\end{split}
\end{equation} Each flow  function defined in \eqref{FD:3param} is zero at both \( \rho = 0 \) and \( \rho = \rho^{\text{max}} \). The three free parameters govern key features of \( f_{\alpha,\lambda,p}(\rho) \): the maximum flow rate \( f^{\text{max}} \) is primarily influenced by \( \alpha \); the critical density \( \sigma \), defined as the density at which the flow reaches its maximum, is mostly determined by \( p \); and the smoothness of the transition around \( \sigma \), which reflects how sharply the slope changes sign  is mainly controlled by \( \lambda \). This fundamental diagram was chosen for this test, because the presence of several parameters allows for a good fit between model and actual data

Given empirical measurements \((\tilde{\rho}_i, \tilde{f}_i)\) for \(i = 1, \dots, i^{\text{max}}\), the values of \(\sigma\) and \(f^{\text{max}}\) were identified by solving a nonlinear least-squares problem:
\begin{equation}\label{lls}
\min_{\alpha,\lambda,p} \sum_{i=1}^{i^{\text{max}}} \left|f_{\alpha,\lambda,p}(\tilde{\rho}_i) - \tilde{f}_i\right|^2.
\end{equation} In other words, the objective is to minimize the \(L^2\) norm between the observed flow values and the theoretical predictions of the three-parameter flux function \eqref{FD:3param}.

\begin{table}[ht]
\centering
\begin{tabular}{llccccc}
\hline
 & & \multicolumn{2}{c}{\textbf{Italian dataset}} & \multicolumn{3}{c}{\textbf{German dataset}} \\
\cmidrule(lr){3-4} \cmidrule(lr){5-7}
 & & Lane 1 & Lane 2 & Lane 1 & Lane 2 & Lane 3 \\
\hline
\multirow{3}{*}{Parameters}
& $\alpha$ (veh/sec)   & 495.30 & 862845.49 & 99.05 & 229.01 & 108.55 \\
& $\lambda$ & 0.20   & 3.65      & 0.09  & 0.13   & 0.10 \\
& $p$       & 7.63   & 1.02      & 74.03 & 36.05  & 79.48 \\
\hline
\end{tabular}
\caption{Test 5. Parameters from data fitting for Italian and German datasets.}
\label{tab:params_sub}
\end{table}

The solutions of \eqref{lls} obtained for the two datasets are summarized in Table~\ref{tab:params_sub}, while the corresponding fitted fundamental diagrams are shown in Figs.~\ref{fig:realdata_italy} and~\ref{fig:realdata_germany}, where they are overlaid with the empirical measurements for visual comparison. It is worth recalling that the value of the maximum density \(\rho^{\text{max}}\) typically reflects the physical characteristics of the road rather than being inferred directly from the data. For the Italian dataset, which includes \(i^{\text{max}} = 2880\) measurements, we selected a value of \(\rho^{\text{max}}\) equal to 50 veh/km for lane 1 and 90 veh/km for lane 2, consistent with a heterogeneous traffic composition involving both heavy and light vehicles. Notably, lane 1 experiences a significant flow of trucks and articulated lorries, whose lengths are considerably greater than those of standard passenger cars. In contrast, for the German dataset, comprising \(i^{\text{max}} = 601\) measurements, a higher value of \(\rho^{\text{max}}=133\) veh/km, was adopted to reflect a traffic stream composed almost exclusively of passenger vehicles for all lanes.

%% ITALIA
% Prima figura: dati reali
\begin{figure}[ht]
\centering
\includegraphics[width=\textwidth]{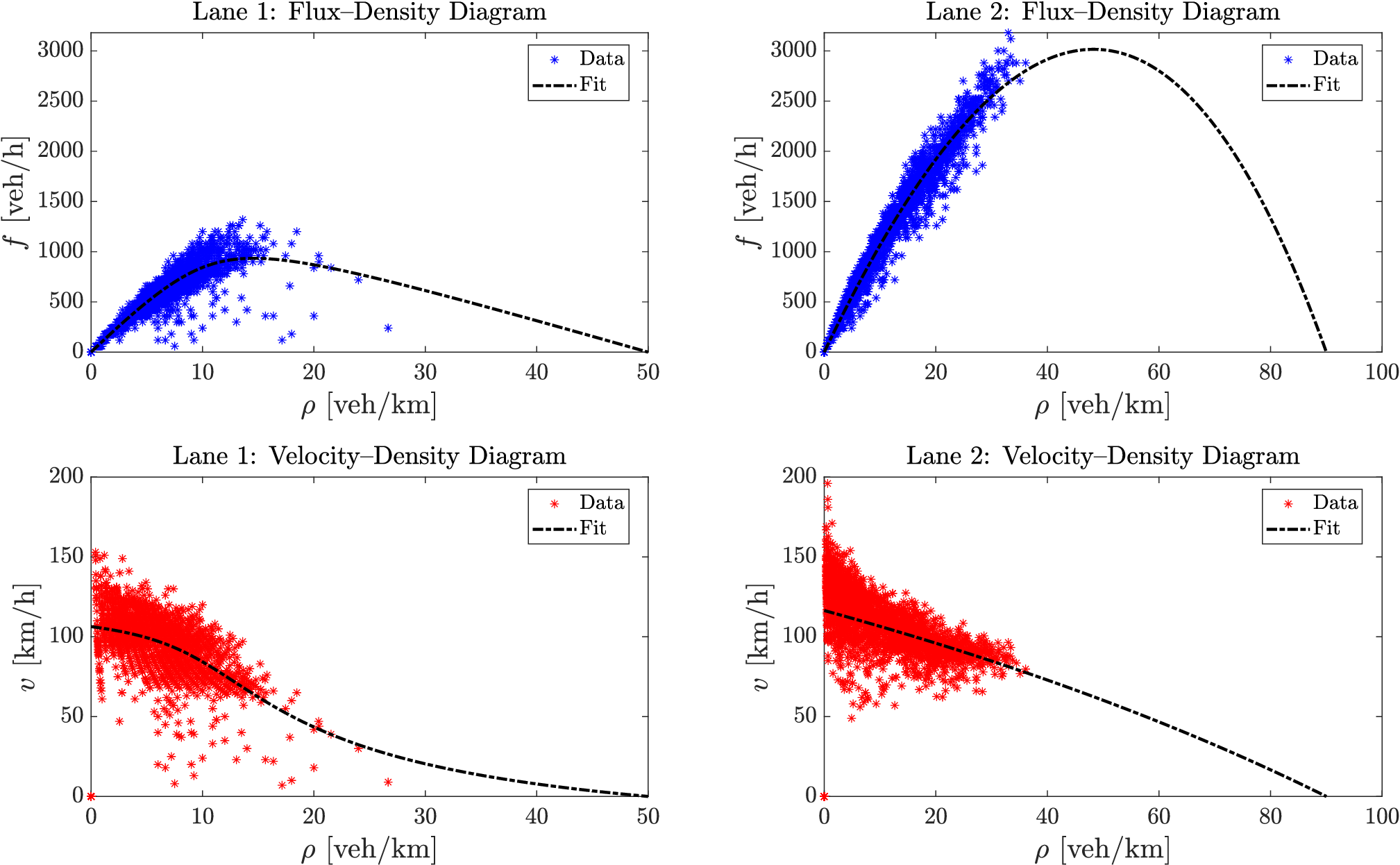}
\caption{Empirical data from the Italian dataset. The top row shows density--flow diagrams, and the bottom row shows density--velocity diagrams. Data from the two lanes of the A4 highway segment are displayed, with theoretical fits superimposed.}
\label{fig:realdata_italy}
\end{figure}

% Prima figura: dati reali
\begin{figure}[ht]
\centering
\includegraphics[width=\textwidth]{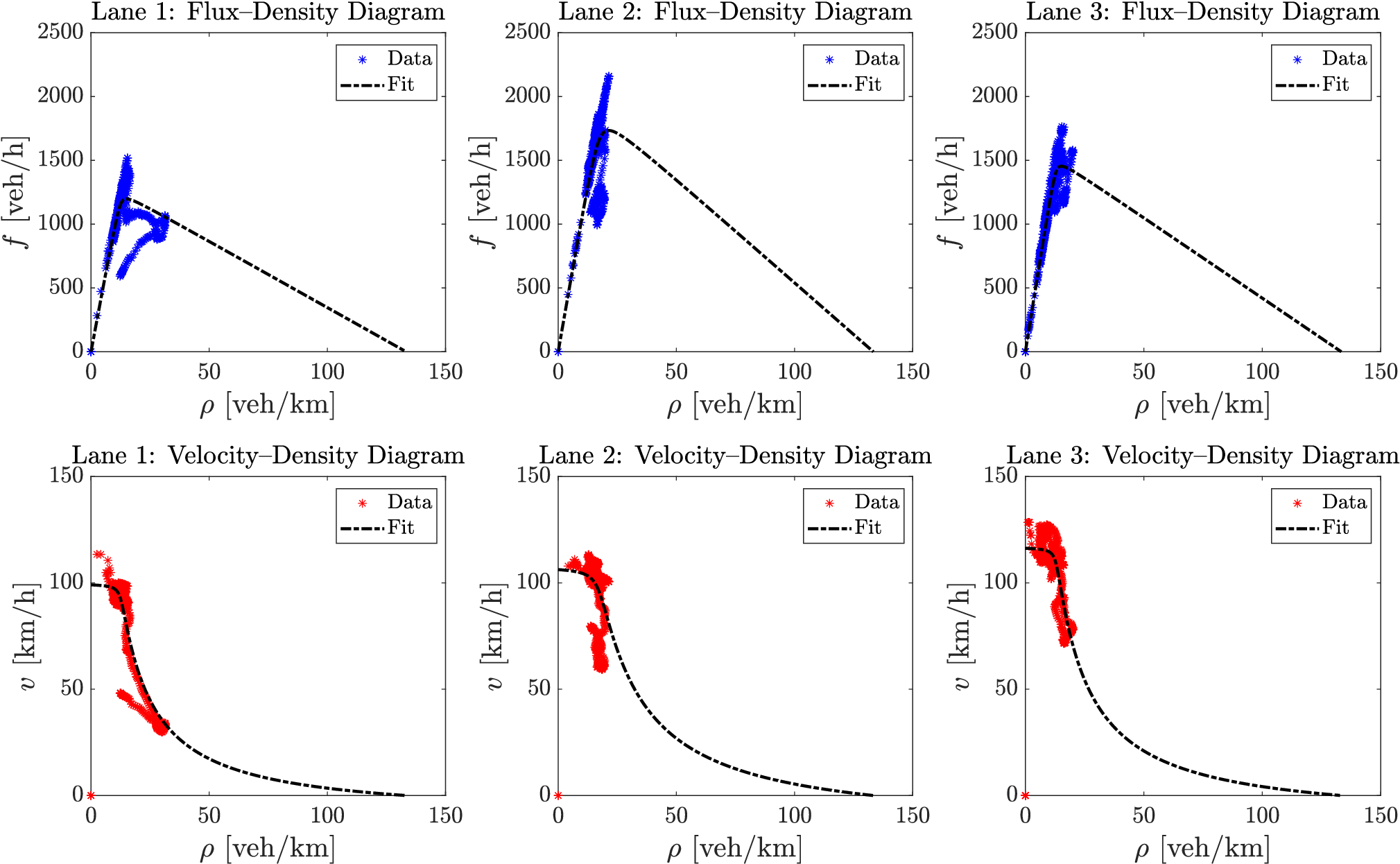}
\caption{Test 5. Empirical data from the German dataset. The top row shows density--flow diagrams, and the bottom row shows density--velocity diagrams. Data from the three lanes of the A3 motorway segment are displayed, with theoretical fits superimposed.}
\label{fig:realdata_germany}
\end{figure}

\paragraph{Data-driven initialization}

%To evaluate the predictive capabilities of the two multilane models, we simulated traffic dynamics using the desired velocity functions, of the Daganzo-Newell flux, defined in \eqref{tendinabis}, employing the normalized parameters reported in Table~\ref{tab:params_sub}. These parameters were chosen to match the general structure of the empirical fundamental diagrams, without pursuing full-scale calibration.

\begin{figure}[htbp]
\centering

\begin{subfigure}[htbp]{\textwidth}
\centering
\includegraphics[width=\textwidth]{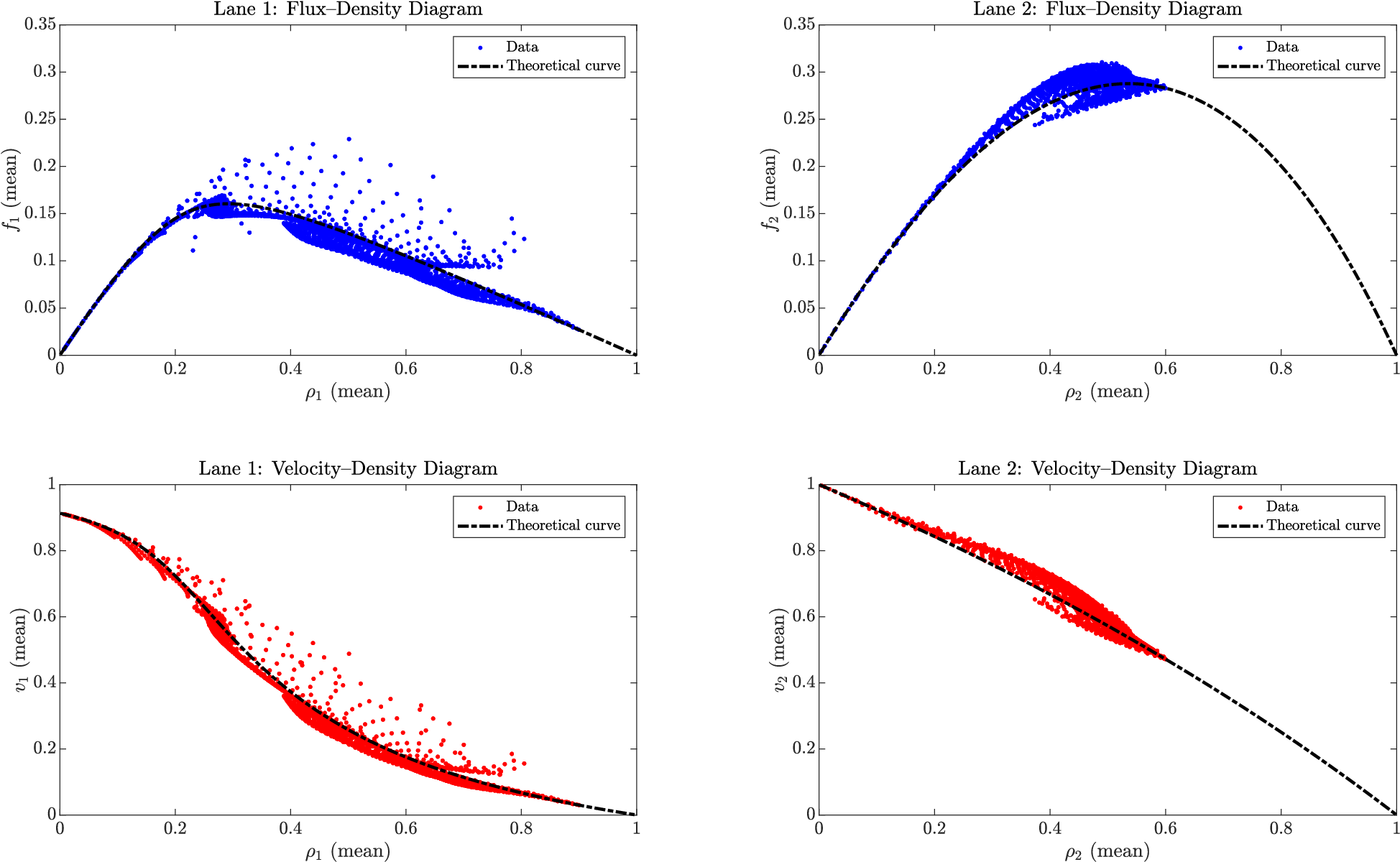}
\caption{Simulation results with Model 1. Normalized theoretical fits are compared with simulated data.}
\label{fig:mod1_italy}
\end{subfigure}

\vspace{1em}

\begin{subfigure}[htbp]{\textwidth}
\centering
\includegraphics[width=\textwidth]{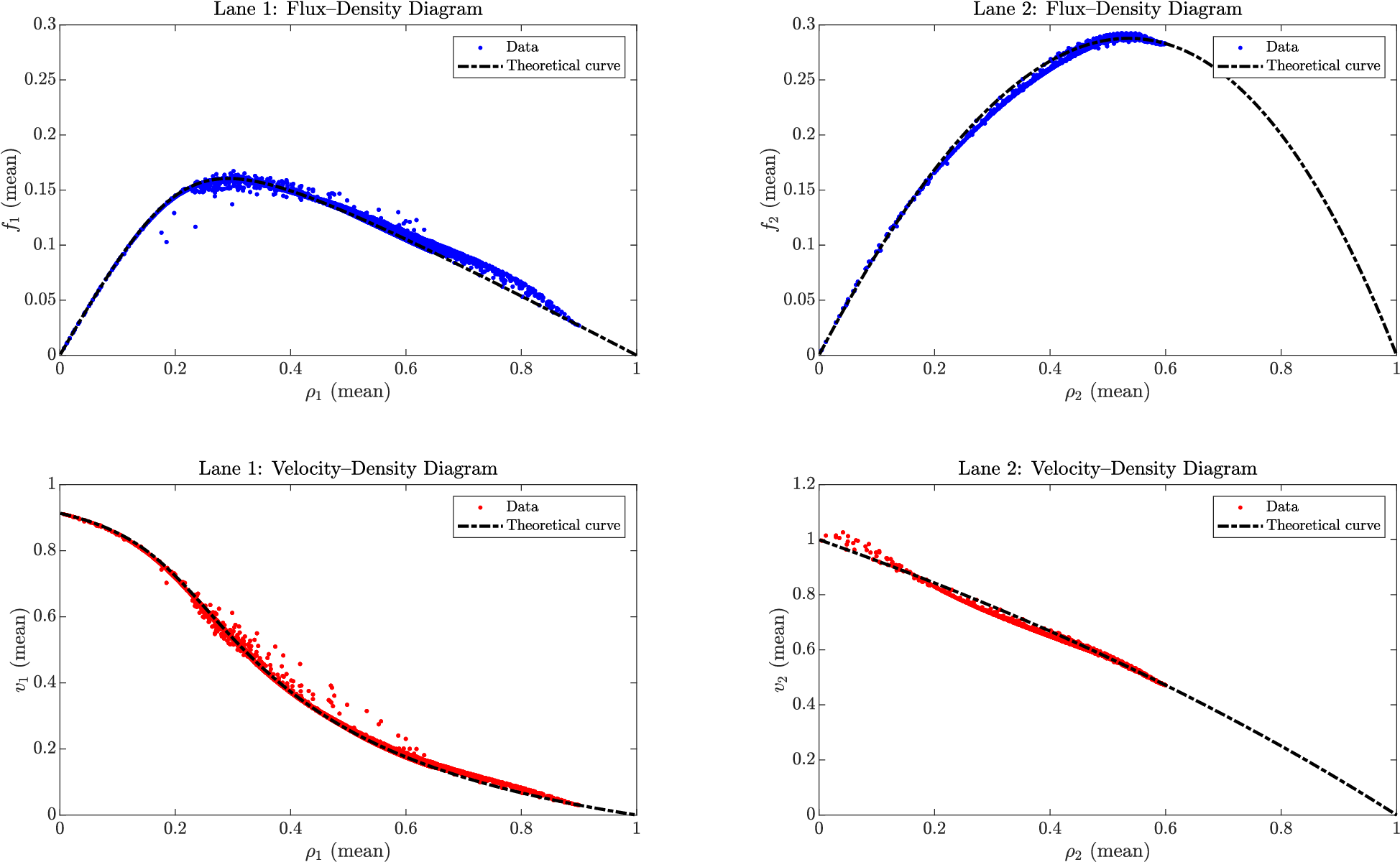}
\caption{Simulation results with Model 2. Normalized theoretical fits are compared with simulated data.}
\label{fig:mod2_italy}
\end{subfigure}

\caption{Test 5. Comparison between simulation results from Model 1 and Model 2 for the Italian two-lane dataset.}
\label{fig:italy_models}
\end{figure}

\begin{figure}[htbp]
\centering

\begin{subfigure}[htbp]{\textwidth}
\centering
\includegraphics[width=\textwidth]{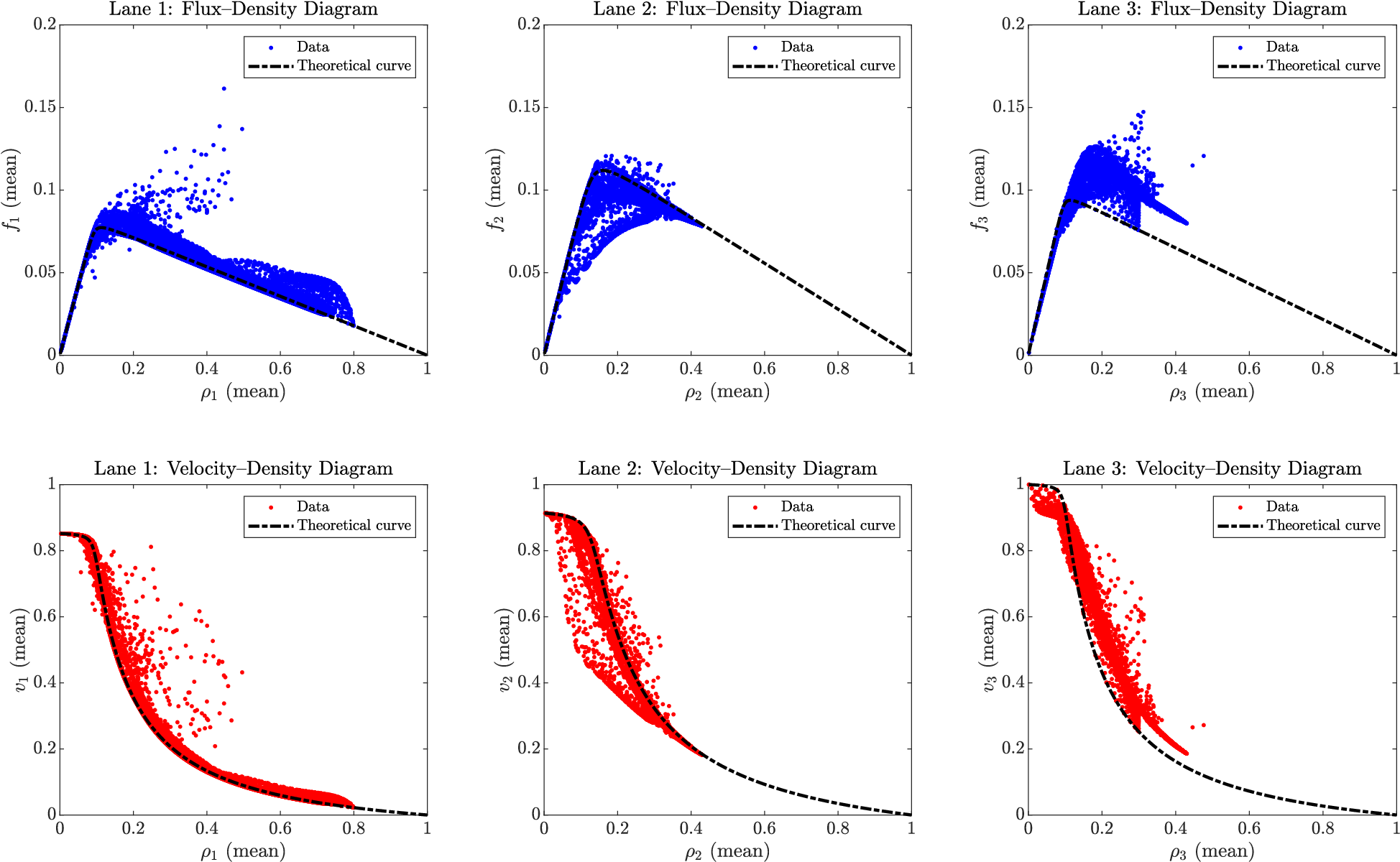}
\caption{Simulation results with Model 1. Normalized theoretical fits are compared with simulated data.}
\label{fig:mod1_germany}
\end{subfigure}

\vspace{1em}

\begin{subfigure}[htbp]{\textwidth}
\centering
\includegraphics[width=\textwidth]{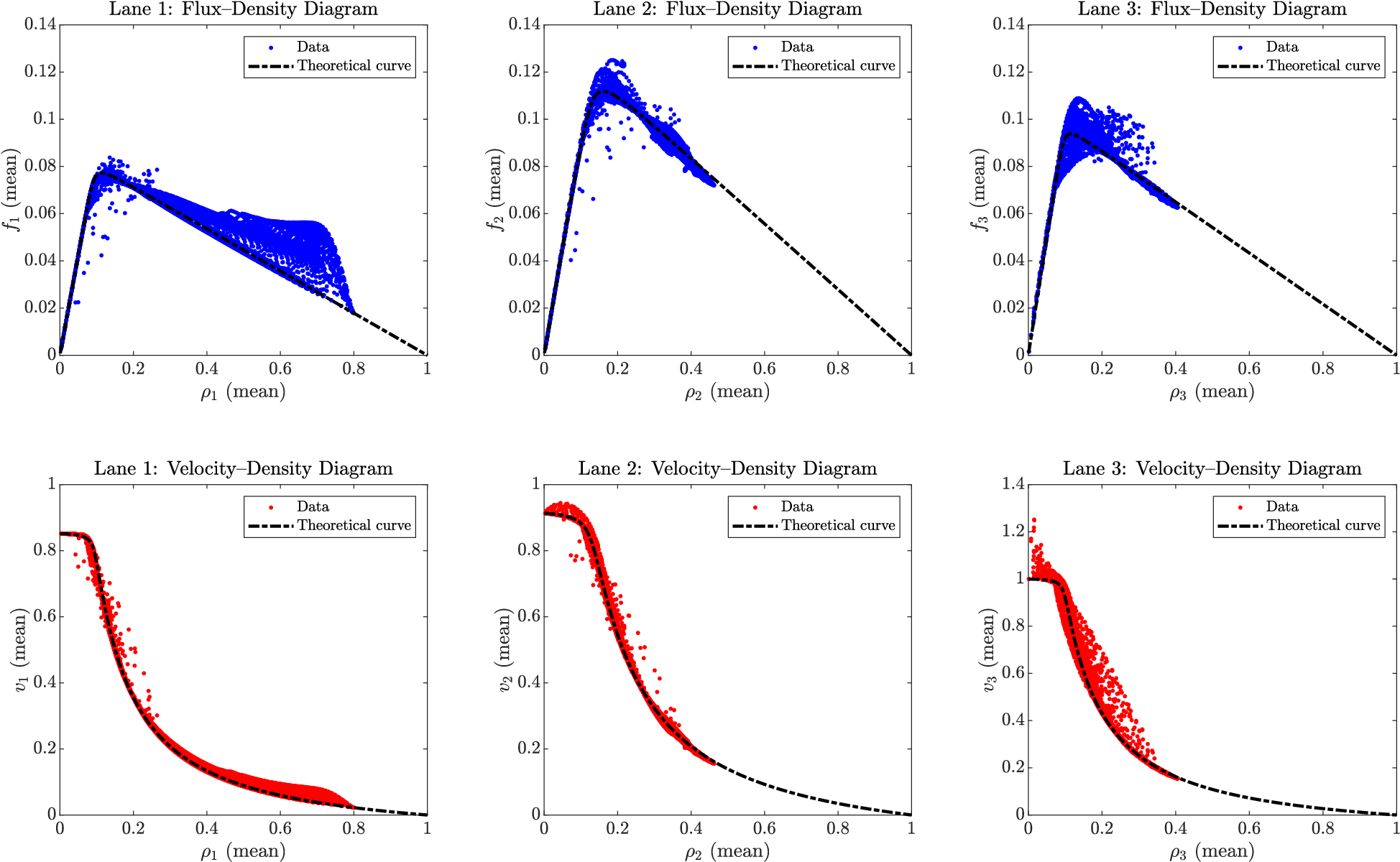}
\caption{Simulation results with Model 2. Normalized theoretical fits are compared with simulated data.}
\label{fig:mod2_germany}
\end{subfigure}

\caption{Test 5. Comparison between simulation results from Model 1 and Model 2 for the German three-lane dataset.}
\label{fig:germania_models}
\end{figure}

To further assess the consistency of our model with real traffic behavior, we performed numerical simulations inspired by the Italian and German datasets. In each case, the desired velocity profile was defined as
\begin{equation*}
V_{\alpha,\lambda,p}(\rho)=\begin{cases}   
\frac{f_{\alpha,\lambda,p}(\rho)}{\rho}, & \text{if } \rho > 0, \\
f_{\alpha,\lambda,p}'(0), & \text{if } \rho = 0,
\end{cases}
\end{equation*}
where the function \( f_{\alpha,\lambda,p} \) was specified separately for each lane according to the parameters reported in Table~\ref{tab:params_sub}. Subsequently, the functions were rescaled by normalizing both density and velocity values. Specifically, density was scaled by setting \(\rho^{\max} = 1\), while velocity was normalized with respect to the maximum value at zero density of the fastest lane, corresponding to lane 2 in the Italian two-lane setup and lane 3 in the German three-lane case.

\paragraph{Simulation setup based on Italian and German datasets} In the following, we describe the simulations for both datasets, detailing the initial conditions and model parameters, which were selected based on the analysis of the corresponding empirical data. We consider two separate numerical experiments inspired by the empirical datasets. In the first case, a two-lane configuration ($J=2$) is defined on the spatial domain $[-10, 10]$ with periodic boundary conditions. The domain is discretized with spatial step $\Delta x = 0.005$, and the simulation is run over a temporal horizon $[0,5]$. Model parameters are fixed as $\alpha=1$, $\beta=\nu=\gamma=1$, and $\eta=0.01$. The initial conditions are designed to induce lane-changing activity in both directions through alternating density profiles:
\begin{equation*}
\begin{array}{ll}
\rho_1(x,0) = 0.9\, |\sin(\pi x/10)|, & v_1(x,0) = V_1(\rho_1(x,0)), \\[0.5em]
\rho_2(x,0) = 0.6\, |\cos(\pi x/10)|, & v_2(x,0) = V_2(\rho_2(x,0)). \\
\end{array}
\end{equation*} In the second case, we consider a three-lane setup ($J=3$) on the same spatial domain, with identical discretization and time horizon. The parameters are set to $\alpha = 10$, $\beta = 2$, $\nu = \gamma = 1$, and $\eta = 0.01$. Initial conditions are constructed to trigger multidirectional lane transitions:
\begin{equation*}
\begin{array}{ll}
\rho_1(x,0) = 0.8\, |\cos(\pi x/20)|, & v_1(x,0) = V_1(\rho_1(x,0)), \\[0.5em]
\rho_2(x,0) = 0.4\, |\sin(\pi x/10)|, & v_2(x,0) = V_2(\rho_2(x,0)), \\[0.5em]
\rho_3(x,0) = 0.3\, |\cos(\pi x/15)|, & v_3(x,0) = V_3(\rho_3(x,0)). \\
\end{array}
\end{equation*}

\paragraph{Post-processing and qualitative comparison}
To facilitate visual comparison with empirical diagrams, the simulated data have been post-processed via a spatial moving average applied separately to each lane. Specifically, for each time step, the density and velocity profiles on each lane were convolved with a local averaging kernel over a fixed spatial window. This operation smoothes out high-frequency numerical oscillations and mimics the coarse-graining effect typical of detector-based measurements, where values are aggregated over space and time. As a result, the reconstructed flow–density patterns more closely resemble empirical observations and allow for a clearer qualitative assessment of the model's predictive capabilities.

The simulation results for the Italian dataset are shown in Figure~\ref{fig:italy_models}, while those for the German dataset appear in Figure~\ref{fig:germania_models}. In both cases, the simulations qualitatively reproduce key features of real-world traffic. For low densities, both the simulated and empirical diagrams follow closely the theoretical fundamental diagrams. In the medium-to-high density regime, although the amount of available data is limited, especially in the congested phase, the models correctly capture the critical density, i.e., the point at which the flow reaches its maximum and starts to decline. This behavior aligns well with the expected turning point in the fundamental diagram and reflects a realistic transition between free-flow and congested traffic conditions.

Notably, our model captures the onset of scattering in high-density regimes, a feature typically missed by first-order models that rely on a single-valued flow-density relationship. The second-order structure enables the system to explore off-equilibrium regimes, such as non-steady transitions between free and congested phases, and asymmetric lane-changing dynamics that vary with local conditions.

In addition, the simulations replicate several empirical phenomena. We note lane asymmetries caused by heterogeneous traffic composition and lane-specific preferences (e.g., faster drivers in the left lane and slower/heavier vehicles on the right). These emerge naturally from the interaction terms and the lane-dependent desired speed profiles. Moreover, the capacity drop at the onset of congestion  clearly emerges.

Based on the simulation results for both the Italian and German datasets, we observe that Model 2 exhibits superior predictive performance compared to Model 1. In particular, the diagrams produced by Model 2 show a better alignment with the empirical scattering patterns in medium and high-density regimes, as well as more realistic lane-specific differences in flow and speed. This suggests that the additional structural elements introduced in Model 2, especially in terms of lane-changing dynamics and relaxation behavior, contribute significantly to its ability to reproduce complex traffic phenomena captured in the data.

These results provide strong qualitative validation of the model's ability to replicate realistic multilane traffic patterns. Remarkably, this agreement is achieved without a full-scale parameter calibration or data assimilation procedure, suggesting that the model structure is robust and capable of generalizing across different traffic environments.

\section{Conclusions} \label{sez:conclusions}
This work investigates the derivation of second-order macroscopic models for multilane vehicular traffic starting from the microscopic Bando–Follow-the-Leader (BFtL) framework. The focus is to describe the transition from microscopic to macroscopic representations in the presence of lane-changing dynamics.

The analysis initially considers a microscopic formulation of lane-changing based on position and velocity variables. As highlighted during the transition from discrete to continuous variables, the resulting macroscopic dynamics inherently depend on the chosen closure relations. This approach yields the first macroscopic model (Model 1), closed by deriving an evolution equation directly for the velocity (requiring specific assumptions on post-transition speeds), where the coupling between lanes appears also in the convective part of the velocity equation. Due to its structural complexity, a simplified alternative (Model 2) is derived by changing the closure variable and modeling lane-changing in terms of the second conserved quantity $y$ of the ARZ model. Although the conservation of $y$ during lane changes relies on heuristic macroscopic considerations rather than strict physical principles, this results in a more compact formulation, where lane-changing contributes only through source terms in the velocity equation, while the hyperbolic part remains unchanged.

The structural divergence between the two models is also evident in their relaxation limits. Since lane-changing in Model 1 directly affects the equilibrium velocity, its relaxation yields a novel, non-standard first-order macroscopic system.Conversely, Model 2 naturally relaxes to the first-order multilane model \eqref{macro_1_ord},  derived in \cite{piu2}.

Both models are shown to be hyperbolic, ensuring well-posedness and consistency with the underlying physical principles. The anisotropic structure of the models is also analyzed.

The resulting models can be interpreted as macroscopic limits of the multilane BFtL model or as multilane generalizations of the second-order ARZ model. Importantly, the source terms are not introduced heuristically, but are rigorously derived from microscopic dynamics.

Through the numerical tests, we confirm the asymptotic consistency of the solutions of the microscopic multilane model with those of the two macroscopic multilane models. Furthermore, by calibrating the macroscopic models using experimental data, we construct simulated fundamental diagrams which are compared with empirical ones. These comparisons qualitatively validate the models' ability to capture key traffic features, with Model 2 showing a better fit with data  across different scenarios, proving that its heuristic structure ultimately provides a more accurate macroscopic description of real-world traffic.

The proposed framework provides a structured methodology for bridging microscopic and macroscopic traffic descriptions in multilane settings. The numerical results suggest that the inclusion of second-order dynamics improves the model's ability to reproduce key traffic phenomena, such as shock waves and the capacity drop.

Future work will focus on extending the framework to account for heterogeneous vehicle classes, adaptive lane-changing strategies, and and further calibration against a broader set of empirical data. These include, for instance, multivalued fundamental diagrams (which cannot be reproduced by first-order models) and the anisotropic occupancy of lanes (which cannot be captured by single-lane models) Additionally, a dedicated numerical and analytical investigation of the novel non-standard first-order system obtained from the relaxation of Model 1 will be the subject of future research.

\section*{Acknowledgments}
We are particularly grateful to Prof.~Michael Herty of RWTH Aachen University for illuminating discussions and suggestions.

M.P. acknowledges MUR-PRIN Project 2022 No. 2022N9BM3N ``Efficient numerical schemes and optimal control methods for time-dependent partial differential equations'' financed by the European Union - Next Generation EU.

G.V. acknowledges the support of MUR (Ministry of University and Research) under the MUR-PRIN PNRR Project 2022 No. P2022JC95T ``Data-driven discovery and control of multi-scale interacting artificial agent system''.

G.P. acknowledges support from Centro Nazionale Sustainable Mobility Center - PNRR-CN4\_SPOKE\_9 - B83C22002900007.

M.P., G.P., G.V. are members of the GNCS-INdAM Group.

\bibliographystyle{plain} % oppure plain, alpha, ecc.
\bibliography{references_new2}

\end{document}